\newenvironment{properties}[2][1]{
	\begin{enumerate}[label={\normalfont (\ref{#2}\alph*)},labelwidth=*,leftmargin=*,itemsep=0pt,topsep=3pt, start=#1]
	}{
\end{enumerate}
}
\newtheorem{theorem}{Theorem}[section]
\newtheorem{lemma}[theorem]{Lemma}
\newtheorem{corollary}[theorem]{Corollary}
\newtheorem{observation}[theorem]{Observation}
\newcommand{\set}[1]{\left\{#1\right\}}
\DeclareMathOperator{\union}{\bigcup}
\renewcommand{\hat}{\widehat}
\renewcommand{\tilde}{\widetilde}
\newcommand{\R}{\mathbb{R}}
\newcommand{\Z}{\mathbb{Z}}
\newcommand{\calS}{{\mathcal S}}
\newcommand{\laminarKC}{\textsf{Laminar-KC}\xspace}
\newcommand{\intervalKC}{\textsf{Interval-KC}\xspace}
\newcommand{\lotsizing}{\textsf{CMILS}\xspace}
\newcommand{\KC}{\textsf{KC}\xspace}
\newcommand{\hcost}{\mathsf{hcost}}
\newcommand{\newi}{{i}}
\title{Constant Approximation Algorithm for Non-Uniform Capacitated Multi-Item Lot-Sizing via Strong Covering Inequalities}
\author{Shi Li\thanks{Department of Computer Science and Engineering, University at Buffalo, 1 White Road, Buffalo, NY 14260.  {\tt shil@buffalo.edu}. Supported in part by NSF grant CCF-1566356.}}
\date{}
\begin{document}
	\maketitle
	\begin{abstract}
	We study the non-uniform capacitated multi-item lot-sizing (\lotsizing) problem. In this problem, there is a set of demands over a planning horizon of $T$ time periods and all demands must be satisfied on time.  We can place an order at the beginning of each period $s$, incurring an ordering cost $K_s$. The total quantity of all products ordered at time $s$ can not exceed a given capacity $C_s$. On the other hand, carrying inventory from time to time incurs inventory holding cost. The goal of the problem is to find a feasible solution that minimizes the sum of ordering and holding costs.
	
	Levi et al.\ (Levi, Lodi and Sviridenko, Mathmatics of Operations Research 33(2), 2008) gave a 2-approximation for the problem when the capacities $C_s$ are the same.  In this paper, we extend their result to the case of non-uniform capacities. That is, we give a constant approximation algorithm for the capacitated multi-item lot-sizing problem with  general capacities.
	
	The constant approximation is achieved  by adding an exponentially large set of new covering inequalities to the natural facility-location type linear programming relaxation for the problem.  Along the way of our algorithm, we reduce the \lotsizing problem to two generalizations of the classic knapsack covering problem. We give LP-based constant approximation algorithms for both generalizations, via the iterative rounding technique. 
\end{abstract}
\thispagestyle{empty}
\newpage

\setcounter{page}{1}

\section{Introduction}

Since the seminal papers of Wagner and Whitin \cite{WW58} and Manne \cite{Man58} in late 1950's,  lot-sizing problems has become one of the most important classes of problems in inventory management and production planning (\cite{PW06}). Given a sequence of time-varying demands for different products over a time horizon, a lot-sizing problem asks for the time periods for which productions and orders take place and the quantities of products to be produced and ordered, so as to minimize the total production, ordering and inventory holding cost.  

In many practical settings, due to the shortage of resources such as manpower, equipments and budget, there are (possibly time-dependent) capacity constraints on the total units of products that can be produced or ordered at a time. Thus, when modeling the lot-sizing problems, it is important to take these capacity constraints into account.  Often, the capacity constraints make the problems computationally harder. There has been an immense amount of work on capacitated lot-sizing problems, from the perspective of integer programming, heuristics, as well as tractability of special cases.

In this paper, we study the single-level capacitated multi-item lot-sizing (\lotsizing) problem, where good decisions have to be made to balance two costs: ordering cost and inventory holding cost.  Placing an order at some time $s$ incurs a time-dependent ordering cost $K_s$, making it too costly to place an order at every time period.  On the other hand, placing a few orders will result in holding products in inventory to satisfy future demands, incurring high holding cost.   When an order is placed at time $s$, there is a capacity $C_s$ on the total amount of products that can be ordered.  

We study the \lotsizing problem from the perspective of approximation algorithms. When the capacities are uniform, i.e, the capacities $C_s$ for different time periods $s$ are the same, Levi et al.\ \cite{LLS08} gave a $2$-approximation algorithm for the problem. They used the flow covering inequalities that were introduced in \cite{PRW85}, to reduce the unbounded integrality gap of the natural facility-location type LP relaxation for the \lotsizing problem to $2$.  However, the flow-covering inequalities heavily used the uniform-capacity property; it seems hard to extend them to the problem with non-uniform capacities.

In this paper, we give a 10-approximation algorithm for non-uniform capacitated multi-item lot-sizing problem.  Inspired by the ``effective capacity'' idea of the knapsack covering inequalities introduced by Carr et al.\ \cite{CFL00}, we introduce a set of covering inequalities that strengthen the natural facility-location type LP relaxation. We believe our covering inequalities can be applied to many other problems with non-uniform capacities. In the cutting-plane method for solving the integer programmings for capacitated problems, our covering inequalities may be used to generate an initial solution that is provably good.   Along the way of our algorithm, we reduce the \lotsizing problem to two generalizations of the classic knapsack covering problem, namely, the interval and laminar knapsack covering problems. We give an iterative rounding algorithm for the laminar knapsack covering problem.  The two generalizations and the use of iterative rounding may be of independent interest. 

\paragraph{Problem Definition} 

In the capacitated multi-item lot-sizing (\lotsizing) problem, there is a finite time horizon of $T$ discrete time periods indexed by $[T]$, and a set of $N$ items indexed by $[N]$.  For each item $i \in {[N]}$, we have a demand that $d_i > 0$ units of item $i$ must be ordered by time $r_i \in [T]$.  For each $s \in [T]$, we can place an order at the beginning of period $s$,  incurring an \emph{ordering cost} $K_s > 0$.  If an order is placed at time $s \in [T]$, we can order any subset of items. However, the quantity of total units ordered for all items can not exceed a given capacity $C_s > 0$.  Carrying inventory over periods incurs \emph{holding costs}. For every $i \in {[N]}, s \in [r_i]$, we use $h_{i}(s)$ to denote the per-unit cost of holding one unit of item $i$ from period $s$ to period $r_i$.  We assume that $h_{i}$ is non-increasing. Also, a unit of item $i$ ordered at the beginning of period $r_i$ can be used to satisfy the demand for item $i$ immediately; thus we assume $h_{i}(r_i) = 0$. The goal of the problem is to satisfy all the demands so as to minimize the sum of ordering and holding costs.  

The mixed integer programming for the problem is given in \ref{MIP:lotsizing}, 	which is a facility-location type programming. For every $i \in {[N]}$ and $s \in [r_i]$, $x_{s, i} \in [0, 1]$ specifies the fraction of demand $i$ that is ordered at time $s$, and for every $s \in [T]$, $y_s \in \{0, 1\}$ indicates whether we place an order at time $s$ or not.  The goal of the MIP is to minimize the sum of the ordering cost $\sum_{s \in [T]}y_s K_s$ and the holding cost $\sum_{i \in {[N]}}d_i\sum_{s \in [r_i]}x_{s,i}h_i(s)$.  Constraint~\eqref{LPC:ls-items-covered} requires that the demand for every item $i$ is fully satisfied, Constraint~\eqref{LPC:ls-x-leq-y} restricts that we can order an item at time $s$ only if we placed an order at time $s$, and Constraint~\eqref{LPC:ls-capacity} requires that the total amount of demand ordered at time $s$ does not exceed $y_sC_s$. Notice that in a feasible solution $(x^*, y^*)$ to the \lotsizing problem, $y^*$ has to be integral, but $x^*$ can be fractional. 

\begin{figure*}[h]\begin{mdframed}
	\begin{equation}
	\min \quad \sum_{s\in [T]} y_s K_s +  \sum_{i \in {[N]}}d_i \sum_{s \in [r_i]} x_{s, i} h_i(s) \qquad \text{s.t.}  \tag{$\mathsf{MIP}_\mathsf{CMILS}$}  \label{MIP:lotsizing}
	\end{equation} \vspace*{-1.2\abovedisplayskip}\vspace*{-1.2\belowdisplayskip}
	
	\noindent\begin{tabular}{p{0.45\textwidth}p{0.52\textwidth}}
		{\begin{alignat}{2}
			\sum_{s \in [r_i]}x_{s, i} &= 1, &\qquad \forall i &\in {[N]}; \label{LPC:ls-items-covered} \\
			x_{s, i} &\leq y_s, &\qquad \forall i &\in {[N]}, s \in [r_i];  \label{LPC:ls-x-leq-y}
		\end{alignat}} &
		{\begin{alignat}{2} 
			\sum_{i \in {[N]}: r_i \geq s}x_{s, i}d_i &\leq y_s C_s, &\qquad \forall s &\in [T]; \label{LPC:ls-capacity} \\
			x_{s, i} &\in [0, 1], &\qquad \forall i &\in {[N]}, s \in [r_i];  \label{LPC:ls-between-01} \\
			y_{s} &\in \{0, 1\}, & \qquad \forall s &\in [T]. \label{LPC:integral}
		\end{alignat}}
	\end{tabular} \vspace*{-18pt}
\end{mdframed}\end{figure*}
	
In the traditional setting for multi-item lot-sizing, there is a demand of value $d_{s, i}$ for each item $i$ at each time period $s$, and the holding cost is linear for each item $i$: there will be a per-unit host cost $h'_i(s)$ for carrying one unit of item $i$ from period $s$ to $s+1$.  The way we defined our holding cost functions allows us to assume that there is only one demand for each item: if there are multiple demands for an item at different periods, we can think of that the demands are for different items, by setting the holding cost functions for these items correctly.  This assumption simplifies our notation: instead of using an $(s, i)$-pair to denote a demand, we can use a single index $i$. Thus, we do not distinguish between demands and items: an element $i \in {[N]}$ is both a demand and an item.   In our setting, the traditional single-item lot-sizing problem corresponds to the case where there is a function $h': [T-1]\to \R_{\geq 0}$ such that $h_i(s) = \sum_{t = s}^{r_i-1}h'(t)$ for every $i \in [N]$ and $s \in [r_i]$. 

\paragraph{Know Results}  For the single-item lot-sizing problem, the seminal paper of Wagner and Whitin \cite{WW58} gave an efficient dynamic programming algorithm for the uncapacitated version. Later, efficient DP algorithms were also found for the single-item lot-sizing problem with uniform hard capacities (\cite{FLR80}) and uniform soft capacities (\cite{PW93}). When the capacities are non-uniform, the problem becomes weakly NP-hard (\cite{FLR80}), but admits an FPTAS (\cite{HW01}). Thus, we have a complete understanding of the status of the single-item lot-sizing problem.

For the multi-item lot-sizing problem, the dynamic programming in \cite{WW58} carries over if the problem is uncapacitated. Levi et al.\ \cite{LLS08} showed that the uniform capacitated multi-item lot-sizing problem is already strongly NP-hard, and gave a $2$-approximation algorithm for the problem. Special cases of the \lotsizing problem have been studied in the literature.  Anily and Tzur \cite{AT05} gave a dynamic programming algorithm for the special case where the capacities and ordering costs are both uniform, and the number of items is a constant. (In our model, this means that there is a family of $O(1)$ functions $h':[T-1] \to R_{\geq 0}$, such that for every $i \in [N]$, there is a function $h'$ from the family such that $h_i(s) = \sum_{t = s}^{r_i-1}h'(t)$  for every $s \in [r_i]$.) Anily et al.\ \cite{ATW09} considered the \lotsizing problem with uniform capacities and the monotonicity assumption for the holding cost functions. This assumption says that there is an ordering of items according to their importance.  For every time period $s$, the cost of holding an unit of item $i$ from period $s$ to $s+1$, is higher than (or equal to) the cost of holding one unit of a less important item from period $s$ to $s+1$. \cite{ATW09} gave a small size linear programming formulation to solve this special case exactly.  Later, Even et al.\ \cite{ELR08} gave a dynamic programming for the problem when demands are polynomially bounded and the holding cost functions satisfy the monotonicity assumption. The assumption of polynomially bounded demands is necessary even with the monotonicity assumption, since otherwise the problem is weakly NP-hard (\cite{FLR80}). 

\subsection{Our Reults and Techniques}
Our main result is a constant approximation algorithm for the capacitated multi-item lot-sizing problem. 
\begin{theorem}\label{thm:lot-sizing}
	There is a $10$-approximation algorithm for the multi-item lot-sizing problem with non-uniform capacities. 
\end{theorem}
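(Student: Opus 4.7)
The plan is to follow the outline announced in the abstract: first strengthen the natural LP relaxation \eqref{MIP:lotsizing} by an exponential family of covering inequalities in the spirit of the knapsack-cover inequalities of Carr, Fleischer, Leung and Phillips, then round the stronger LP by reducing to two covering sub-problems---\intervalKC and \laminarKC---each of which admits a constant-factor iterative rounding algorithm.

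For the strengthened LP I would introduce, for each interval $(a,b]$ and each candidate subset $A$ of periods in $(a,b]$, a covering constraint saying that even after ``paying'' for the full capacity of the periods in $A$, the residual demand whose deadline lies in $(a,b]$ still has to be covered by the $y_s$'s of the remaining periods, each scaled by an effective capacity capped at the residual demand. Validity is immediate: in an integral solution either some $s \in A$ is actually open (making the constraint trivial) or all of $A$ is closed (in which case the residual demand really must be routed through the remaining periods). These inequalities rule out the canonical bad example where a single large demand is fractionally covered by many periods with tiny $y_s$, which is the source of the unbounded integrality gap for non-uniform capacities.

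Given a feasible fractional $(x^*, y^*)$ to this stronger LP, the next step is a structural decomposition of the horizon $[T]$. I would greedily scan intervals and cut them off whenever the accumulated $y^*$-mass crosses a constant threshold, producing a nested family of intervals. Inside each interval the remaining fractional periods look exactly like the items of a knapsack-cover instance: the ordering costs $K_s$ play the role of item costs, the capacities $C_s$ play the role of item sizes, and the residual demand to be served plays the role of the covering requirement. When the sub-instances are disjoint we obtain \intervalKC; when we must round all intervals of the nested family simultaneously we obtain \laminarKC.

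The hard part will be the iterative-rounding analysis for \laminarKC: a naive rounding loses a factor proportional to the depth of the laminar family, so the key lemma should be a token- or fractional-charging argument showing that the number of tight constraints supporting any vertex of the feasible polytope is small, so that a constant number of rounding iterations suffices regardless of depth. Once the two sub-routines are in place, the ordering cost of the rounded solution is paid by the constant-factor guarantees of \intervalKC and \laminarKC (charged to $\sum_s y^*_s K_s$), while the holding cost is paid by the LP's own holding term (with a small constant loss from the interval decomposition, since within each interval production can be shifted by at most the interval's span). Summing the constant slacks from the decomposition, the two roundings, and the covering reduction yields the final bound of $10$ stated in Theorem~\ref{thm:lot-sizing}.
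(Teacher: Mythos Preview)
Your high-level plan---strengthen the LP with knapsack-cover-style inequalities, then reduce to \intervalKC and on to \laminarKC---matches the paper, but two of the key mechanisms are wrong or missing.

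First, the covering inequalities you describe are not valid. A demand $i$ with $r_i \in (a,b]$ can be served by a period $s \leq a$, so it is simply false that ``the residual demand whose deadline lies in $(a,b]$ must be covered by the $y_s$'s of the remaining periods'' in $(a,b]$; your validity argument (``either some $s\in A$ is open, making the constraint trivial, or \ldots'') does not address this. The paper's inequality (Constraint~\eqref{LPC:ls-covering}) repairs this by adding a third term $\sum_{i\in I} x_{[T]\setminus(S_1\cup S_2),i}\,d_i$ that credits demand routed outside $S_1\cup S_2$, and by letting $I\subseteq[N]$ be an \emph{arbitrary} subset of items rather than all items with deadline in an interval. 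Both features are used in Lemma~\ref{lemma:ls-reduce-to-iKC}, where $I=\{i:r_i\in(a,b],\,x_{[a],i}<\tfrac{2}{5}\}$.

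Second, and more seriously, your holding-cost argument is a hand-wave that does not work. Saying ``within each interval production can be shifted by at most the interval's span'' bounds nothing: the functions $h_i$ are arbitrary non-increasing functions, not linear in $r_i-s$, and your intervals can be long. The paper's mechanism is entirely different and is the heart of the reduction: the \intervalKC requirement is \emph{defined from the $x$-variables} as
\[
R_{a,b}\;=\;\sum_{i:\,r_i\in(a,b]}\max\bigl\{1-\tfrac{5}{2}\,x_{[a],i},\,0\bigr\}\,d_i,
\]
so only the ``tail'' mass of each item's fractional distribution counts. Lemma~\ref{lemma:sufficient-condition-for-xstar} then shows, via a fractional $b$-matching / Hall-type argument, that any integral $S^*$ meeting all these $R_{a,b}$'s admits an assignment $x^*$ with $\hcost(x^*)\le\tfrac{5}{2}\,\hcost(x)$. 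Your ``greedy scan of $y^*$-mass'' decomposition plays no role here; the requirements are posed for all $O(T^2)$ intervals simultaneously, and the laminar family is constructed only later, inside the proof of Theorem~\ref{thm:interval-main}, by a recursive splitting that maximizes a quantity $\tilde R'_{a,b}$ derived from the knapsack-cover slack---not from accumulated $y$-mass. Without the tail-based definition of $R_{a,b}$ and the matching lemma tying it back to holding cost, there is no bridge from a feasible \intervalKC solution to a feasible \lotsizing solution with bounded holding cost.
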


We give an overview of our techniques used for proving Theorem~\ref{thm:lot-sizing}.  We start with the natural facility-location type linear programming relaxation for the \lotsizing problem, which is obtained from \ref{MIP:lotsizing} by relaxing Constraint~\eqref{LPC:integral} to $y_s \in [0, 1], \forall s \in [T]$.  Since Constraint~\eqref{LPC:ls-items-covered} requires $\sum_{s \in [r_i]} x_{s, i} = 1$ for every $i \in [N]$, $\{x_{s, i}\}_{s \in [r_i]}$ form a distribution over $[r_i]$. We define the tail of the distribution to be the set of some latest periods in $[r_i]$ that contributes a constant mass to the distribution. 

For simplicity, let us first assume that the capacities for the orders are soft capacities. That is, we are allowed to place multiple orders at each period $s$ in our final solution, by paying a cost $K_s$ for each order.  In this case, the integrality gap of the natural LP relaxation is $O(1)$.  In our rounding algorithm, we scale up the $y$ variables by a large constant. With the scaling, we require that each item $i \in [N]$ is only ordered in the tail of the distribution for $i$. In this way, the holding cost can be bounded by a constant times the holding cost of the LP solution.  

With these requirements, the remaining problem becomes an ``interval knapsack covering'' (\intervalKC) problem, where we view each period $s \in [T]$ as a knapsack of capacity $C_s$ and cost $K_s$. If we place an order at period $s$, then we select the knapsack $s$. For every $a, b \in \Z$ such that $0 \leq a < b < T$, we require that the total capacity of selected knapsacks in $(a, b]$ is at least $R_{a, b}$.  The goal of the \intervalKC problem is to select a set of knapsacks of the minimum cost that satisfies all the requirements.  We give a constant approximation algorithm for \intervalKC based on iterative rounding. Specifically, we first reduce the \intervalKC instance to an instance of a more restricted problem, called ``laminar knapsack covering'' (\laminarKC) problem, in which the intervals $(a, b]$ with positive $R_{a,b}$ form a laminar family. The laminar structure allows us to apply the iterative rounding technique to solve the problem: we maintain an LP and in each iteration the LP is solved to obtain a vertex-point solution; as the algorithm proceeds, more and more variables will become integral and finally the algorithm will terminate.  Overall, the LP solution to the \lotsizing instance leads to a good LP solution to the \intervalKC instance, which in turn leads to a good LP solution to the \laminarKC instance. Therefore, the final ordering cost can be bounded.

However, when the capacities are hard capacities, the integrality gap of the natural LP becomes unbounded. The issue with the above algorithm is that some $y_s$ might have value more than a constant, and scaling it up will make it more than $1$.  So our final solution needs to include many orders at period $s$. Indeed, the same issue occurs for many other problems such as knapsack covering and capacitated facility location, for which the linear programming has $O(1)$ integrality gap  for the soft capacitated version but  unbounded gap for the hard capacitated version. For both problems, stronger LP relaxations are known to overcome the gap instances (\cite{ASS14, CFL00}).  
Our idea behind the strengthened LP is similar to those in \cite{CFL00} and \cite{ASS14}.  If for some $s \in [T]$, $y_s$ is larger than a constant, then we can afford to place an order at time $s$.   So, we break $[T]$ into two sets: $S^+$ contains the periods in which we already placed an order, and $[T] \setminus S^+$ contains the periods $s$ with small $y_s$ value. We consider the residual instance obtained from the input \lotsizing instance by pre-selecting the orders in $S^+$. In this residual instance, all periods $s$ other than the ones with pre-selected orders have small $y_s$ values; thus we hope to run the algorithm for the soft-capacitated multi-item lot-sizing problem.  The challenge is that we can not remove $S^+$ from the residual instance since we do not know what items are assigned to the orders in $S^+$.  To overcome this issue, we introduce a set of strong covering inequalities. These inequalities use the ``effective capacity'' idea from the knapsack covering inequalities introduced in \cite{CFL00}: if we want to use some orders to satisfy $d$ units of demand, then the ``effective capacity'' of an order at time $s$ is $\min\set{C_s, d}$, instead of $C_s$. 
 	\section{Preliminaries}

\paragraph{Notations}
Throughout this paper, $C$ and $K$ will always be vectors in $\R_{\geq 0}^T$.  For any set $S \subseteq [T]$, we define $C(S):= \sum_{s \in S} C_s$ and $K(S):=\sum_{s \in S}K_s$.  For every set $I \subseteq [N]$ of items, we use $d(I) := \sum_{i \in I} d_i$ to denote the total demand for items in $I$.  Since we shall use intervals of integers frequently, we use $(a, b), (a, b], [a, b), [a, b]$ to denote sets of integers in these intervals; the only exception is that $[0, 1]$ will still denote the set of real numbers between $0$ and $1$ (including $0$ and $1$). In particular, an interval over $[T]$ is some $(a, b]$, for some integers $a, b$ such that $0 \leq a < b \leq T$.

\paragraph{Knapsack Covering Inequalities}

In the classic knapsack covering (\KC) problem, we are given a set $[T]$ of knapsacks, where each knapsack $s \in [T]$ has a capacity $C_s > 0$ and a cost $K_s \geq 0$. The goal of the problem is to select a subset $S^* \subseteq [T]$ with $C(S^*) \geq R$, so as to minimize $K(S^*)$. That is, we want to select a set of knapsacks with total capacity at least $R$ so as to minimize the total cost of selected knapsacks. In the \lotsizing problem, if we only have one item $1$ with demand $d_1 = R$ and $r_1 = T$ and the holding cost is identically 0, then the problem is reduced to the \KC problem. 

It is well-known that the \KC problem is weakly NP-hard and admits an FPTAS based on dynamic programming. However, for many problems that involve capacity covering constraints, it is hard to incorporate the dynamic programming technique. This motivates the study of LP relaxations for \KC. The naive LP relaxation, which is $\min \sum_{s \in [T]}y_sK_s$ subject to $\sum_{s \in [T]}y_sC_s \geq R$ and $y \in [0, 1]^{T}$,  has unbounded integrality gap, even if we assume that $C_s \leq R$ for every $s \in [T]$. Consider the instance with $T=2, C_1 = R-1, C_2 = R, K_1 = 0$ and $K_2 = 1$, where $R$ is a very large number.  The LP solution $(y_1, y_2) = (1, 1/R)$ has cost $1/R$; however, the optimum solution to the problem has cost $1$. 

To overcome the above gap instance, Carr et al.\ \cite{CFL00} introduced a set of valid inequalities, that are satisfied by all integral solutions to the given \KC instance. The inequalities are called \emph{knapsack covering (\KC) inequalities}, which are defined as follows:
\begin{equation*}
	\sum_{s \in [T] \setminus S} \min\set{C_s, R - C(S)} y_s \geq R - C(S), \quad \forall S \subseteq [T] \text{ s.t. } C(S) < R. \tag{Knapsack Covering Inequality}
\end{equation*}
For each $S \subseteq [T]$ with $C(S) < R$, the above inequality requires the selected knapsacks in $[T] \setminus S$ to have a total capacity at least $R-C(S)$. In this residual problem, if the capacity of a knapsack $s \in [T] \setminus S$ is more than $R-C(S)$, its ``effective capacity'' is only $R-C(S)$. So the \KC inequalities are valid.  For the above gap instance, the \KC inequality for $S = \{1\}$ requires $\min\set{C_2, R - (R-1)}y_2 \geq R - (R-1)$, i.e, $y_2 \geq 1$.  Thus the KC inequalities can handle the gap instance. Indeed, it is shown in \cite{CFL00} that the LP with all the \KC inequalities has integrality gap $2$.

\paragraph{Rounding Procedure That Detects Violated Inequalities} Although the \KC inequalities can strengthen the LP for \KC, the LP with all these inequalities can not be solved efficiently.  This issue can be circumvented by using a rounding algorithm that detects violated inequalities. Given a vector $y \in [0, 1]^T$, there is a rounding algorithm that either returns a feasible solution to the given \KC instance, whose cost is at most $2\sum_{s \in [T]}y_s K_s$, or returns a knapsack covering inequality that is violated by $y$. This rounding algorithm can be used as a separation oracle for the Ellipsoid method.  We keep on running the Ellipsoid method, as long as the rounding algorithm is returning a violated constraint. When the algorithm fails to give a violated constraint, it returns a feasible solution whose cost is guaranteed to be at most twice the cost of the optimum LP solution. This technique has been used in many previous results, e.g. \cite{ASS14, CFL00, LLS08, Li15, Li16}.  In particular, \cite{LLS08} used this technique to obtain their 2-approximation for the uniform-capacitated multi-item lot-sizing problem.  In this paper, we also apply the technique to obtain our 10-approximation for the problem with non-uniform capacities.

\paragraph{Laminar and Interval Knapsack Covering}  Along the way of our algorithm, we introduce two generalizations of the knapsack covering problem. In the laminar knapsack covering (\laminarKC) problem, we are given a set $[T]$ of knapsacks, each $s \in [T]$ with a capacity $C_s > 0$ and a cost $K_s \geq 0$. We are also given a laminar family $\calS$ of intervals of $[T]$. That is, for every two distinct intervals $(a, b], (a', b'] \in \calS$, we have either $(a, b] \cap (a', b'] = \emptyset$, or $(a, b] \subsetneq (a', b']$, or $(a', b'] \subsetneq (a, b]$.   For each $(a, b] \in \calS$, we are given a requirement $R_{a, b} > 0$.  The goal of the problem is to find a set $S^* \subseteq [T]$ such that $C(S^* \cap (a, b]) \geq R_{a, b}$ for every $(a, b] \in \calS$, so as to minimize $K(S^*)$.  W.l.o.g we assume that $[T] \in \calS$. We use $(T, C, K, \calS, R)$ to denote a \laminarKC instance.

The interval knapsack covering (\intervalKC) problem is a more general problem, in which we are still given a set $[T]$ of knapsacks, each $s \in [T]$ with a capacity $C_s > 0$ and a cost $K_s \geq 0$. However now we have a requirement $R_{a, b} \geq 0$ for every interval $(a, b]$ over $T$. The goal of the problem is to select a set $S^*$ of knapsacks with the minimum cost, such that $C(S^* \cap (a, b]) \geq R_{a, b}$, for every interval $(a, b]$ over $[T]$. We use $(T, C, K, R)$ to denote an \intervalKC instance.

We remark that the dynamic programming for \KC can be easily extended to give an FPTAS for the \laminarKC problem. However, we do not know how it can be applied to the more general \intervalKC problem, as well as the \lotsizing problem. Our algorithm for \lotsizing heavily uses the LP technique. We reduce the \lotsizing problem to the \intervalKC problem, which will be further reduced to the \laminarKC problem. Both reductions require the use of linear programming.  After the reductions, we obtain a fractional solution to some LP relaxation for the \laminarKC and round it to an integral solution.  We now state the two main theorems for the two problems, that will be used in our algorithm for \lotsizing. 

\begin{theorem}[\bf Main Theorem for \laminarKC] \label{thm:laminar-main}
	Let $(T, C, K, \calS, R)$ be a \laminarKC instance and $y \in [0, 1]^T$. Let $S^+ = \set{s \in {[T]}: y_s = 1}$ and $\tilde R_{a, b} := \max\set{R_{a, b}- C(S^+ \cap (a, b]), 0}$ for every $(a, b] \in \calS$. Suppose for every $(a, b] \in \calS$ with $\tilde R_{a, b}> 0$, we have
	\begin{flalign}
		\text{either} && \sum_{s \in (a, b]  \setminus S^+} \min\left\{C_s, \tilde R_{a, b}\right\} y_s   &\geq 2 \tilde R_{a, b}, && \label{inequ:laminar-main-requirement-1} \\
		\text{or} && \sum_{s \in (a, b] \setminus S^+: C_s \geq \tilde R_{a, b} } y_s &\geq 1. && \label{inequ:laminar-main-requirement-2}
	\end{flalign}
	Then, we can efficiently find a feasible solution $S^* \supseteq S^+$ to the instance such that $K(S^*) \leq \sum_{s \in {[T]}}y_s K_s$.
\end{theorem}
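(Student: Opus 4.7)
The plan is iterative LP rounding exploiting the laminar structure of $\calS$. First I would pass to the residual instance on $[T]\setminus S^+$ with requirements $\tilde R_{a,b}$, discarding intervals where $\tilde R_{a,b}=0$ and observing that conditions \eqref{inequ:laminar-main-requirement-1} and \eqref{inequ:laminar-main-requirement-2} are inherited by the surviving intervals. Writing $y^\circ$ for the restriction of $y$ to $[T]\setminus S^+$, the reformulated task is to find $S''\subseteq[T]\setminus S^+$ with $C(S''\cap(a,b])\geq \tilde R_{a,b}$ for every surviving $(a,b]\in\calS$ and $K(S'')\leq\sum_{s\notin S^+}y_sK_s$.

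Next I would set up a knapsack-covering-style LP. Classify each surviving $(a,b]\in\calS$ as type (I) if \eqref{inequ:laminar-main-requirement-1} holds and as type (II) otherwise. Variables are $z_s\in[0,1]$ for $s\notin S^+$, and for each $(a,b]$ the LP constraint is $\sum_{s\in(a,b]\setminus S^+}\min\{C_s,\tilde R_{a,b}\}z_s\geq \tilde R_{a,b}$ if type (I), and $\sum_{s\in(a,b]\setminus S^+,\,C_s\geq \tilde R_{a,b}}z_s\geq 1/2$ if type (II); the objective is $\min\sum K_sz_s$. By the two conditions, $z=y^\circ/2$ is LP-feasible, so the LP optimum is at most $\tfrac{1}{2}\sum_{s\notin S^+}y_sK_s$. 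Moreover, any LP-feasible integer $z$ yields a valid $S''$: for type (I), $\sum_{s\in(a,b]\cap S''}C_s\geq\sum\min(C_s,\tilde R_{a,b})z_s\geq \tilde R_{a,b}$; for type (II), an integer $z$ with $\sum z_s\geq 1/2$ forces $\sum z_s\geq 1$, so some $s\in(a,b]$ with $C_s\geq \tilde R_{a,b}$ is chosen.

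The crux is a factor-$2$ iterative rounding of this LP. Each iteration solves the current LP for a vertex $z^*$: if some $z^*_s=0$, delete $s$; if some $z^*_s\geq 1/2$, include $s$ in $S''$ at cost $K_s\leq 2K_sz^*_s$, decrement every ancestor interval's residual by $C_s$, drop intervals whose residual turns non-positive, and recurse on the shrunken LP. The main obstacle is showing that at any vertex either some $z^*_s=0$ or some $z^*_s\geq 1/2$. I would pursue this via a dimension-counting argument on the laminar family: at a hypothetical vertex with all coordinates in $(0,1/2)$, the tight LP constraints form a laminar sub-family whose cardinality matches the number of free variables, and a bottommost tight interval $(a,b]$ cannot admit all of its ``own'' free variables strictly below $1/2$; for type (I), the tight identity $\sum_{s\in(a,b]}\min\{C_s,\tilde R_{a,b}\}z^*_s=\tilde R_{a,b}$ combined with $z^*_s<1/2$ would force $\sum_{s\in(a,b]}\min(C_s,\tilde R_{a,b})>2\tilde R_{a,b}$, which together with the laminar bookkeeping should be shown to violate the linear independence of tight constraints.

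Finally, the cost accounting closes the proof: the knapsacks added to $S''$ over all iterations cost at most $2\cdot$LP OPT $\leq \sum_{s\notin S^+}y_sK_s$, so $K(S^*)=K(S^+)+K(S'')\leq K(S^+)+\sum_{s\notin S^+}y_sK_s=\sum_{s\in[T]}y_sK_s$, as required.
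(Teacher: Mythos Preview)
Your high-level plan---iterative LP rounding over knapsack-cover constraints indexed by the laminar family, after passing to the residual instance on $[T]\setminus S^+$---is exactly the paper's approach, and your initial feasibility check for $y^\circ/2$ is fine. The real gap is in the rounding/cost step. When you commit an $s$ with $z^*_s\ge 1/2$ and decrement the residual $\tilde R_{a,b}\mapsto \tilde R_{a,b}-C_s$, the truncated coefficients $\min\{C_{s'},\tilde R_{a,b}\}$ in every type-(I) constraint containing $s$ also drop, so the restriction of $z^*$ to the remaining variables need \emph{not} be feasible for the new LP; without that, the telescoping behind ``$K(S'')\le 2\cdot\text{LP OPT}$'' collapses. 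Concretely, the left side loses $C_s z^*_s$ from removing $s$ plus up to $C_s\cdot\sum_{s':C_{s'}\ge \tilde R^{\mathrm{new}}}z^*_{s'}$ from coefficient shrinkage, while the right side loses only $C_s$; since $z^*_s$ can be anywhere in $[1/2,1]$ this can exceed $C_s$. The paper avoids this by two coupled design choices you are missing: it keeps the factor $2$ \emph{inside} the LP (so the RHS drops by $2C_s$, not $C_s$) and rounds only when $y_s=1$ (so the ``remove $s$'' loss is exactly $C_s$), and it adds a type-switching rule---whenever, after the update, $\sum_{s'\in(a,b]\setminus S^*:\,C_{s'}\ge \hat R^{\mathrm{new}}}y_{s'}\ge 1$, move $(a,b]$ from the type-(I) family to the type-(II) family. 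With both ingredients the left side drops by strictly less than $2C_s$ and the right by exactly $2C_s$, so feasibility is preserved and the LP value is monotone; this is the content of the paper's Lemma~\ref{lemma:middle-loop-maintain-feasibility}, and it is precisely what your threshold-$1/2$ scheme cannot reproduce.

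A secondary gap is the progress argument. The counting that yields ``a bottommost tight interval with at most one free variable'' requires first discarding one of any two tight intervals that share the same set of surviving knapsacks (the paper's Statements~\ref{STATE:laminar-check-remove}--\ref{STATE:laminar-remove-2}); otherwise a one-child node in the tight laminar forest need not own any free knapsack and the pigeonhole fails. Once you have such a leaf, the contradiction is immediate from the constraint itself (one free variable with $z^*_s<1/2$ gives left side $<\tilde R_{a,b}/2$ for type (I), and $<1/2$ for type (II)); the route you sketch via ``$\sum\min(C_s,\tilde R)>2\tilde R$ violates linear independence'' is not the mechanism.
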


\begin{theorem}[\bf Main Theorem for \intervalKC] \label{thm:interval-main}
	Let $(T, C, K,  R)$ be an \intervalKC instance and $y \in [0, 1]^T$. Let $S^+ = \set{s \in {[T]}: y_s = 1}$ and $\tilde R_{a, b} := \max\set{R_{a, b}- C(S^+ \cap (a, b]), 0}$ for every interval $(a, b]$ over $[T]$. Suppose for every interval $(a, b]$ with $\tilde R_{a, b}> 0$, we have	
	\begin{flalign}
		\text{either} && \sum_{s \in (a, b] \setminus S^+} \min\set{C_s, \tilde R_{a, b}}y_s   &\geq 10 \tilde R_{a, b}, && \label{inequ:interval-main-requirement-1} \\
		\text{or} && \sum_{s \in (a, b] \setminus S^+: C_s \geq \tilde R_{a, b}} y_s &\geq 6. && \label{inequ:interval-main-requirement-2}
	\end{flalign}
	Then, we can efficiently find a feasible solution $S^* \supseteq S^+$ to the instance such that $K(S^*) \leq \sum_{s \in [T]}y_s K_s$.
\end{theorem}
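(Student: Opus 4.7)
The overall plan is to reduce the given \intervalKC instance $(T,C,K,R)$ to a \laminarKC instance and then invoke Theorem~\ref{thm:laminar-main} with the same vector $y$ and the same pre-selected set $S^+$. Since Theorem~\ref{thm:laminar-main} only asks for the weaker factors $2$ and $1$ in \eqref{inequ:laminar-main-requirement-1}/\eqref{inequ:laminar-main-requirement-2}, whereas the hypothesis of Theorem~\ref{thm:interval-main} guarantees the stronger factors $10$ and $6$ in \eqref{inequ:interval-main-requirement-1}/\eqref{inequ:interval-main-requirement-2}, there is a factor-$5$ (resp.\ factor-$6$) slack that I would use to compensate for the loss incurred by the reduction.

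First I would build a laminar family $\calS$ of intervals over $[T]$, together with requirements $\set{R^\calS_{a,b}}_{(a,b]\in\calS}$, by a top-down procedure: starting from the root $(0,T]$, at each recursive call on a current interval $J$ I would pick an inclusion-maximal anti-chain of pairwise-disjoint sub-intervals of $J$ that are ``heavy'' (have large $\tilde R_{a,b}$ relative to the residual capacity of $J$), add them as children of $J$, set their laminar requirements, and recurse inside each child. The requirement $R^\calS_{a,b}$ attached to each $(a,b]\in\calS$ is chosen so that (i) it is at most $R_{a,b}$, ensuring the original covering hypothesis implies the laminar one for $(a,b]$, and (ii) whenever an original interval $(a',b']$ with $\tilde R_{a',b']}>0$ fails to appear in $\calS$, its closest laminar ancestor in $\calS$ still has a requirement large enough that any integral solution $S^*\supseteq S^+$ meeting the laminar requirements is forced to place enough capacity inside $(a',b']$.

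Next I would verify the two halves of the reduction. For feasibility transfer, given any $S^*\supseteq S^+$ with $C(S^*\cap(a,b])\ge R^\calS_{a,b}$ for all $(a,b]\in\calS$, I would fix an arbitrary original interval $(a',b']$, locate the smallest laminar ancestor $(a,b]\in\calS$ containing it, and bound $C(S^*\cap(a',b'])$ from below using $R^\calS_{a,b}$ minus the total capacity available in $(a,b]\setminus(a',b']$; the construction in the previous paragraph ensures this lower bound is at least $R_{a',b'}$. For preservation of the LP hypotheses, I would check on a per-laminar-interval basis that the $5\times$ (resp.\ $6\times$) slack in the input covering inequality \eqref{inequ:interval-main-requirement-1} (resp.\ \eqref{inequ:interval-main-requirement-2}) for $(a,b]$ still leaves the laminar versions \eqref{inequ:laminar-main-requirement-1}/\eqref{inequ:laminar-main-requirement-2} valid after the replacement of $R_{a,b}$ by $R^\calS_{a,b}\le R_{a,b}$. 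Finally, Theorem~\ref{thm:laminar-main} applied to $(T,C,K,\calS,R^\calS)$ and $y$ yields an $S^*\supseteq S^+$ feasible for the laminar instance with $K(S^*)\le\sum_{s\in[T]}y_sK_s$; by the feasibility transfer, $S^*$ is also feasible for the \intervalKC instance.

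The main obstacle is the design of the laminar family in the first step: the choices of which sub-intervals enter $\calS$ and of the laminar requirements $R^\calS_{a,b}$ must simultaneously support the feasibility transfer and the validity of the weaker laminar covering inequalities, and the correct distribution of the $5\times$/$6\times$ slack across these two goals is where the numerical constants $10$ and $6$ in the statement come from. I expect that a careful accounting — splitting each original requirement into (i) capacity that can be satisfied by child laminar intervals and (ii) residual capacity that must come from ``gap'' parts of the parent — will show that these constants are exactly what is needed to make both pieces go through.
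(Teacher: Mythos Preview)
Your high-level plan --- reduce to a \laminarKC instance and invoke Theorem~\ref{thm:laminar-main} with the same $y$ and $S^+$ --- matches the paper. But the concrete mechanism you sketch for the reduction runs in the wrong direction and would not work.

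You propose to cover each original interval $(a',b']$ by its closest laminar \emph{ancestor} $(a,b]\in\calS$, arguing $C(S^*\cap(a',b'])\ge R^\calS_{a,b}-C((a,b]\setminus(a',b'])$. For this to give $C(S^*\cap(a',b'])\ge R_{a',b'}$ you would need $R^\calS_{a,b}\ge R_{a',b'}+C((a,b]\setminus(a',b'])$, yet you also insist $R^\calS_{a,b}\le R_{a,b}$. Nothing ties $R_{a,b}$ to $R_{a',b'}+C((a,b]\setminus(a',b'])$; capacities in the gap $(a,b]\setminus(a',b']$ can be arbitrarily large, so the bound is vacuous. The ancestor-based transfer simply cannot force capacity into a specific sub-interval. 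Your vague ``heavy anti-chain'' construction does not supply the missing control, and your two design constraints (i) and (ii) on $R^\calS$ are in direct tension.

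The paper inverts the containment. It defines, for every interval $(a,b]$, the quantity $\tilde R'_{a,b}$ as the \emph{largest} $W$ for which the laminar hypothesis \eqref{inequ:laminar-main-requirement-1} or \eqref{inequ:laminar-main-requirement-2} holds at level $W$; this makes the laminar hypotheses automatic by construction, so there is nothing to ``preserve''. The laminar family is then built by recursively splitting $(a,b]$ at the point $c$ that maximizes $\min\{\tilde R'_{a,c},\tilde R'_{c,b}\}$. The key lemma is that every original interval $(a,b]$ with $\tilde R_{a,b}>0$ contains a laminar \emph{descendant} $(a',b']\in\calS$ with $\tilde R'_{a',b'}\ge\tilde R_{a,b}$: one first passes to the minimal laminar interval containing $(a,b]$, whose split point $c$ lies strictly inside $(a,b]$; the factor $10$ (resp.\ $6$) ensures one half, say $(a,c]$, carries mass $\ge 5\tilde R_{a,b}$ (resp.\ $\ge 3$), and then the balanced-split rule guarantees the laminar child inside that half has $\tilde R'\ge\tilde R_{a,b}$. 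Feasibility then transfers trivially upward: $C(S^*\cap(a,b])\ge C(S^*\cap(a',b'])\ge R'_{a',b'}$, and a short bookkeeping with $S^+$ finishes. The constants $10$ and $6$ are consumed entirely in this descendant-finding step, not in any split between ``feasibility transfer'' and ``hypothesis preservation'' as you suggested.
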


In both theorems, we are given a fractional vector $y \in [0, 1]^T$ to the given instance. $S^+$ is the set of knapsacks with $y$ values being $1$ and thus can be included in our final solution $S^*$.  $\tilde R_{a, b}$ can be viewed as the residual requirement for the interval $(a, b]$ after we selected knapsacks in $S^+$.  Notice that Inequality~\eqref{inequ:laminar-main-requirement-1} (resp. Inequality~\eqref{inequ:interval-main-requirement-1}) is the knapsack covering inequality with ground set $(a, b]$, $S_1 = (a, b] \cap S^+$, and the right side multiplied by $2$ (resp. $10$).     

With Theorem~\ref{thm:interval-main}, we can give an LP-based $10$-approximation for the \intervalKC problem.  The rounding algorithm takes a vector $y \in [0, 1]^T$ as input.  We let $\hat y_s = \min\set{10 y_s, 1}$ for every $s \in [T]$, $S^+ = \set{s \in [T]:\hat y_s = 1}$ and $\tilde R_{a,b} = \max\set{R_{a,b} - C((a, b] \cap S^+), 0}$ for every interval $(a, b]$. Focus on every interval $(a, b]$ with $\tilde R_{a, b }>0$. If the \KC inequality is satisfied for the ground set $(a, b]$ and $S_1 = S^+ \cap (a, b]$, then Inequality~\eqref{inequ:interval-main-requirement-1} with $y$ replaced by $\hat y$ holds for this $(a, b]$; otherwise, we return this \KC inequality. Then, we can apply Theorem~\ref{thm:interval-main} to obtain a feasible solution $S^*$ whose cost is at most $\sum_{s \in [T]} \hat y_sK_s \leq 10\sum_{s \in [T]}y_s K_s $; this gives us a 10-approximation algorithm for the \intervalKC problem.

\begin{corollary}
	There is a 10-approximation algorithm for the \intervalKC problem. 
\end{corollary}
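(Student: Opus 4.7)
The plan is to formalize the sketch given in the paragraph preceding the corollary. First I would write down the LP relaxation of \intervalKC consisting of all generalized knapsack covering inequalities
\[
\sum_{s \in (a,b] \setminus S_1} \min\set{C_s, R_{a,b} - C(S_1)} y_s \geq R_{a,b} - C(S_1),
\]
one for every interval $(a, b]$ over $[T]$ and every $S_1 \subseteq (a, b]$ with $C(S_1) < R_{a,b}$, together with $y \in [0,1]^T$ and the objective $\min \sum_{s\in[T]} y_s K_s$. Each such inequality is valid for every integral feasible $S^*$, since $C(S^* \cap (a,b]) \geq R_{a,b}$ and replacing $C_s$ by the ``effective capacity'' $\min\set{C_s, R_{a,b} - C(S_1)}$ only weakens the natural capacity inequality restricted to $(a,b] \setminus S_1$. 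In particular, the LP optimum lower bounds the \intervalKC optimum.

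Next, I would solve this exponential-size LP approximately by the Ellipsoid method equipped with a rounding-based separation oracle in the spirit of \cite{CFL00, LLS08}. Given a candidate $y$, the oracle computes $\hat y_s := \min\set{10 y_s, 1}$, sets $S^+ := \set{s : \hat y_s = 1}$, and $\tilde R_{a,b} := \max\set{R_{a,b} - C(S^+ \cap (a,b]), 0}$ for each of the $O(T^2)$ intervals. For every $(a,b]$ with $\tilde R_{a,b} > 0$, it tests the knapsack covering inequality corresponding to $S_1 = S^+ \cap (a,b]$, namely
\[
\sum_{s \in (a,b] \setminus S^+} \min\set{C_s, \tilde R_{a,b}} y_s \geq \tilde R_{a,b};
\]
if any fails it is returned as a separating hyperplane. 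Otherwise, observe that $\hat y_s < 1$ on $(a,b] \setminus S^+$, so $\hat y_s = 10 y_s$ there, and scaling the displayed inequality by $10$ gives exactly Inequality~\eqref{inequ:interval-main-requirement-1} with $\hat y$ in place of $y$. Hence the hypothesis of Theorem~\ref{thm:interval-main} is satisfied by $\hat y$, and the theorem produces a feasible $S^* \supseteq S^+$ with $K(S^*) \leq \sum_{s \in [T]} \hat y_s K_s \leq 10 \sum_{s \in [T]} y_s K_s$.

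The main obstacle I anticipate is purely bookkeeping: confirming that the Ellipsoid method terminates in polynomial time when used in this rounding-oracle mode, even though the specific constraint the oracle produces depends on $y$ via $S^+$. This is handled exactly as in \cite{CFL00, LLS08} — every inequality returned by the oracle is a genuine valid inequality of the fixed LP polytope, so the standard convergence analysis for Ellipsoid with an approximate separation oracle applies verbatim, and as soon as the oracle fails to return a violated constraint we obtain a solution $S^*$ of cost at most $10\cdot\mathrm{LPOPT} \leq 10\cdot\mathrm{OPT}$.
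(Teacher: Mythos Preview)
Your proposal is correct and follows essentially the same approach as the paper: it is a formalization of the sketch given in the paragraph immediately preceding the corollary, using the Ellipsoid method with the rounding-based separation oracle described in the Preliminaries, scaling by $10$ to meet the hypothesis of Theorem~\ref{thm:interval-main}, and invoking that theorem to produce $S^*$.
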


\paragraph{Organization} The remaining part of the paper is organized as follows. In Section~\ref{sec:lotsizing}, we give our algorithm for the \lotsizing problem, using Theorem~\ref{thm:interval-main} as a black box. In Section~\ref{sec:laminarKC}, we describe our iterative rounding algorithm for \laminarKC, which proves Theorem~\ref{thm:laminar-main}. In Appendix~\ref{sec:intervalKC}, we use Theorem~\ref{thm:laminar-main} to prove Theorem~\ref{thm:interval-main}, by reducing the given \intervalKC instance to a \laminarKC instance. 	\section{Approximation Algorithm for Capacitated Multi-Item Lot-Sizing}
\label{sec:lotsizing}	
	
In this section, we describe our 10-approximation algorithm for the \lotsizing problem, using Theorem~\ref{thm:interval-main} as a black box.  We first present our LP relaxation with strong covering inequalities. Then we define an \intervalKC instance, where knapsacks correspond to orders in the \lotsizing problem. Any feasible solution to \intervalKC instance gives a set of orders, for which there is a way to satisfy the demands with small holding cost. Finally, we obtain a small-cost solution to the \intervalKC instance, using Theorem~\ref{thm:interval-main}; this leads a solution to the \lotsizing instance, with small holding and ordering costs.

The LP obtained from \ref{MIP:lotsizing} by replacing Constraint~\eqref{LPC:integral} with $y_s \in [0, 1], \forall s \in [T]$ has unbounded integrality gap, which can be derived from the gap instance for $\KC$.   To overcome this integrality gap, we introduce a set of new inequalities. The inequalities are described in Constraint~\eqref{LPC:ls-covering}, where for convenience, we define $x_{s, {i}} = 0$ if $s > r_{i}$ and let $x_{S, {i}} = \sum_{s \in S}x_{s, {i}}$ for every $S \subseteq [T]$ and ${i} \in {[N]}$. 

\begin{figure*}[h]\begin{mdframed}
		\begin{equation}
		\min \quad \sum_{s \in [T]} y_s K_s +  \sum_{{i} \in {[N]}}d_{i} \sum_{s  \in [r_{i}]} x_{s, {i}} h_{i}(s) \qquad \text{s.t.}  \tag{$\mathsf{LP}_\mathsf{new}$}  \label{LP:lotsizing-new}
		\end{equation} \vspace*{-1.2\abovedisplayskip}\vspace*{-1.2\belowdisplayskip}
		
		\hspace*{-16pt} \begin{tabular}{p{0.3\textwidth}p{0.37\textwidth}p{0.31\textwidth}}
			{\begin{alignat}{2}
				\sum_{s \in [r_{i}]}x_{s, {i}} = 1, \forall {i} \in {[N]}; \label{LPC:ls-new-items-covered}
			\end{alignat}} &
			{\begin{alignat}{2} 
				x_{s, {i}} \in [0, 1], \forall {i} \in {[N]}, s \in [r_{i}];   \label{LPC:ls-new-x-between-01}
			\end{alignat}} &
			{\begin{alignat}{2} 
				y_s \in [0, 1], \forall s \in [T];   \label{LPC:ls-new-y-between-01}
			\end{alignat}} 
		\end{tabular} \vspace*{-25pt}
		\begin{align}
			C(S_1)  + \sum_{s \in S_2} \min\set{C_s, d(I)-C(S_1)}y_s  + \sum_{{i} \in I} x_{[T] \setminus (S_1 \cup S_2), i}d_i \geq d(I), \hspace*{0.2\textwidth}  \nonumber \\[-5pt]
			\forall S_1, S_2 \subseteq [T],  I \subseteq {[N]}  \text{ s.t. } S_1 \cap S_2 = \emptyset, C(S_1) < d(I). \label{LPC:ls-covering}
		\end{align}
	\end{mdframed}	
\end{figure*}	

We now show the validity of Constraint~\eqref{LPC:ls-covering}; focus on some sets $S_1, S_2 \subseteq [T]$ and $I \subseteq {[N]}$ such that $S_1 \cap S_2 = \emptyset$ and $C(S_1) < d(I)$.  We break $[T]$ into three sets: $S_1, S_2$ and $[T] \setminus (S_1 \cup S_2)$, and consider the quantities for the units of demands in $I$ that are satisfied by each of the 3 sets. Orders in $S_1$ satisfies at most $C(S_1)$ units of demand, orders in $S_2$ satisfies at most $\sum_{s \in S_2} C_s y_s$ units, and orders in $[T] \setminus (S_1 \cup S_2)$ satisfy exactly $\sum_{{i} \in I}x_{[T] \setminus (S_1 \cup S_2), {i}}d_{i}$ units.   Then, Constraint~\eqref{LPC:ls-covering} is valid if we replace $\min\{C_s, d(I) - C(S_1)\}$ by $C_s$.  In an integral solution we have $y_s \in \{0, 1\}$ for every $s \in S_2$.  If some $s \in S_2$ has $y_s = 1$ and $C_s > d(I) - C(S_1)$, then Constraint~\eqref{LPC:ls-covering} already holds. Thus, even if we replace $C_s$ with $\min\set{C_s, d(I) - C(S_1)}$, the constraint still holds.  In other words, we use orders in $S_2$ to cover at most $d(I)-C(S_1)$ units of demands; an order at time $s \in S_2$ has ``effective capacity'' $\min\set{C_s, d(I) - C(S_1)}$. Indeed, it is the threshold that gives the power of Constraint~\eqref{LPC:ls-covering}; without it, Constraint~\eqref{LPC:ls-covering} is implied by the constraints in the natural LP relaxation, and thus does not strengthen the LP. 

If we let $S_1 = \emptyset, S_2 = \{s\}$ and $I = \{{i}\}$ for some $s \in [T], {i} \in {[N]}$, then Constraint~\eqref{LPC:ls-covering} becomes $\min\set{C_s, d_{i}} y_s + x_{[T] \setminus \{s\}, i} d_i \geq d_i$. Since $x_{[T], i} = 1$ by Constraint~\eqref{LPC:ls-new-items-covered}, the inequality implies $\min\set{C_s, d_{i}} y_s \geq x_{s, i}d_i$. Thus, $x_{s, {i}} \leq \frac{\min\set{C_s, d_{i}}}{d_{i}}y_s \leq y_s$, implying Constraint~\eqref{LPC:ls-x-leq-y}. If we let $S=\emptyset, S_2 = \{s\}$ and $I = {[N]}$ for some $s \in [T]$, then Constraint~\eqref{LPC:ls-covering} becomes $\min\set{C_s, d([N])} y_s + \sum_{{i} \in [N]}x_{[T] \setminus \{s\}, i} d_i \geq d([N])$. Since $x_{[T], i} = 1$ for every $i$ by Constraint~\eqref{LPC:ls-new-items-covered}, the inequality implies $\min\set{C_s, d([N])} y_s \geq \sum_{i \in [N]}x_{s, i}d_i$,  which implies Constraint~\eqref{LPC:ls-capacity}. Thus, in our strengthened LP, we do not need Constraints~\eqref{LPC:ls-x-leq-y} and \eqref{LPC:ls-capacity}.  Our final strengthened LP is \ref{LP:lotsizing-new}.

\paragraph{Rounding a Fractional Solution to \ref{LP:lotsizing-new}}  Let $(x, y)$ be a fixed fractional solution satisfying Constraints~\eqref{LPC:ls-new-items-covered} to \eqref{LPC:ls-new-y-between-01}. We give a rounding algorithm that returns either an inequality of form~\eqref{LPC:ls-covering} that is violated by $y$, or a feasible solution $(x^*, y^*)$ to the \lotsizing instance whose cost is at most $10\left(\sum_{s \in [T]} y_s K_s +  \sum_{{i} \in {[N]}}d_{i} \sum_{s  \in [r_{i}]} x_{s, {i}} h_{i}(s)\right)$.

We define $\hcost(x') := \sum_{{i} \in {[N]}}d_{i} \sum_{s \in [r_{i}]}x'_{s,{i}}h_{i}(s)$, for every $x' \in [0, 1]^{[T] \times {[N]}}$ to be the holding cost of any fractional solution $(x', y')$.   We now define the requirement vector $R$ for our \intervalKC instance. If all the requirements are satisfied, then the demands can be satisfied at a small holding cost. Indeed, the approximation ratio for the holding cost is only ${\frac52}$, better than the ratio 10 for the ordering cost.  For every interval $(a, b]$ over $[T]$, the requirement $R_{a, b}$ is defined as:
\begin{align}
	R_{a, b} := \sum_{{i} \in {[N]}: r_{i} \in (a ,b]} \max\set{1-{\frac52}x_{[a], {i}}, 0}d_{i}. 
\end{align}

\begin{lemma}
	\label{lemma:sufficient-condition-for-xstar}
	Let $y^* \in \{0, 1\}^T$ and $S^*  = \{s \in [T]: y^*_s = 1\}$. If for every interval $(a, b]$ over $[T]$, we have $C(S^*\cap (a, b]) \geq R_{a, b}$, then there is an $x^*$ such that $(x^*, y^*)$ is a feasible solution to the \lotsizing instance and  $\hcost(x^*) \leq {\frac52}\hcost(x)$.
\end{lemma}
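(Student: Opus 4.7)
The plan is to construct $x^*$ such that, for every item $i$ and every $s \in [r_i]$, the cumulative mass $x^*_{[s],i}$ is bounded above by $c_i(s) := \min\{1,\, \tfrac{5}{2}\, x_{[s],i}\}$, and $(x^*, y^*)$ is feasible for the \lotsizing instance. I would prove this in two steps: first, show that any such $x^*$ automatically satisfies $\hcost(x^*) \le \tfrac{5}{2}\, \hcost(x)$; second, establish the existence of such an $x^*$ via a max-flow / Hall-type argument that uses the hypothesis $C(S^* \cap (a,b]) \ge R_{a,b}$.

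For the first step I would apply summation by parts on each item $i$. Using $h_i(r_i)=0$ and $x^*_{[0],i}=0$, one gets $\sum_{s=1}^{r_i} x^*_{s,i} h_i(s) = \sum_{s=1}^{r_i-1} (h_i(s)-h_i(s+1))\, x^*_{[s],i}$. Since $h_i$ is non-increasing, each weight $h_i(s)-h_i(s+1)$ is non-negative, so the pointwise bound $x^*_{[s],i} \le c_i(s) \le \tfrac{5}{2}\, x_{[s],i}$ propagates to $\sum_s x^*_{s,i} h_i(s) \le \tfrac{5}{2} \sum_s x_{s,i} h_i(s)$. Weighting by $d_i$ and summing over $i$ then yields the desired holding-cost bound.

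For the second step, I observe that the cumulative-domination constraint $x^*_{[a],i} \le c_i(a)$ is equivalent to the lower bound $x^*_{(a,r_i]\cap S^*,\,i} \ge (1-\tfrac{5}{2}\, x_{[a],i})^+$ on the fraction of $i$'s demand routed to orders in $(a, r_i]\cap S^*$. I would then view existence of $x^*$ as a transportation problem: for each $i$, distribute the ``supply'' $(c_i(s)-c_i(s-1))\, d_i$ sitting at position $s \in [r_i]$ to some order in $[s, r_i]\cap S^*$, respecting each order's capacity $C_s$. By the bipartite-interval structure, Hall's condition reduces to checking, for every interval $(a,b] \subseteq [T]$, that the supply forced into $(a,b]\cap S^*$ (i.e.\ supply from pairs $(s,i)$ with $s > a$ and $r_i \le b$) is at most $C(S^*\cap(a,b])$. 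A direct calculation shows this forced supply equals $\sum_{i:\, r_i \in (a,b]} (1-\tfrac{5}{2}\, x_{[a],i})^+ d_i = R_{a,b}$, and the hypothesis then closes the argument.

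The main technical obstacle is justifying that Hall's condition for this transportation LP need only be checked on canonical intervals $(a,b]$ rather than on arbitrary subsets of item-position supply pairs. This follows from the interval structure of the windows $[s, r_i]$: for any adversarial subset, the union of reachable positions decomposes into maximal disjoint intervals $(a_j, b_j]$, both the forced supply and the reachable capacity split additively across components, and the per-interval hypothesis handles each component, giving Hall's condition for arbitrary subsets.
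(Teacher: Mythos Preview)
Your proposal is correct and follows essentially the same approach as the paper: both arguments use summation by parts (with the nonincreasing $h_i$) to reduce the holding-cost bound to the cumulative-domination condition $x^*_{[s],i}\le \min\{1,\tfrac{5}{2}x_{[s],i}\}$, then cast the existence of $x^*$ as a fractional $b$-matching/transportation feasibility problem with supplies $(c_i(s)-c_i(s-1))d_i$ routed to orders in $[s,r_i]\cap S^*$, and finally verify Hall's condition by decomposing the relevant neighborhoods into disjoint intervals and observing that the worst-case supply on each interval $(a,b]$ is exactly $R_{a,b}$. The only minor difference is presentational: the paper first enlarges $A'$ to a canonical down-closed form before splitting into intervals, whereas you split into interval components first and then upper-bound the supply within each component---both routes yield the same inequality.
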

The lemma says that if all the requirements are satisfied, then we have an $x^*$ with small $\hcost(x^*)$.  In the proof, we reduce the problem of finding a good $x^*$ to the problem of finding a perfect matching in a fractional $b$-matching instance. We show that the perfect matching exists if all the requirements are satisfied.  We defer the proof of the lemma to Appendix~\ref{sec:proofs}.

Now, our goal becomes to find a set $S^* \subseteq [T]$ satisfying all the requirements. This is exactly an \intervalKC instance, and we shall apply Theorem~\ref{thm:interval-main} to solve the instance.  To guarantee the conditions of Theorem~\ref{thm:interval-main},  it suffices to guarantee that a small number of inequalities in Constraint~\eqref{LPC:ls-covering} are satisfied for $(x, y)$.
\begin{lemma}
	\label{lemma:ls-reduce-to-iKC}
	Let $(a, b]$ be an interval with $R_{a, b} > 0$, and $I = \set{{i} \in {[N]}:r_{i} \in (a, b], x_{[a], {i}} < {\frac25}}$.
	Let $(S_1, S_2)$ be an arbitrary partition of $(a, b]$ into two parts such that $C(S_1) < R_{a, b}$.  If Constraint~\eqref{LPC:ls-covering} is satisfied for $S_1, S_2$ and $I$, then we have 
	\begin{flalign}
		\text{either} && \sum_{s \in S_2}\min\set{C_s, R_{a, b} - C(S_1)} y_s &\geq R_{a, b}- C(S_1), && \label{inequ:ls-reduction-case-1} \\ 
		\text{or} && \sum_{s \in S_2: C_s \geq  R_{a, b} - C(S_1)} y_s &\geq {\frac35}. && \label{inequ:ls-reduction-case-2}
	\end{flalign}
\end{lemma}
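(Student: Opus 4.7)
The plan is to apply Constraint~\eqref{LPC:ls-covering} with the given triple $(S_1, S_2, I)$ and squeeze the two desired inequalities out of it via a case split on the total $y$-mass in ``large'' knapsacks. Write $D := d(I)$ and $R := R_{a,b}$. Note that $i \in I$ iff $r_i \in (a,b]$ and $1 - \frac{5}{2} x_{[a], i} > 0$, so $R = \sum_{i \in I}(1 - \frac{5}{2} x_{[a], i}) d_i$; this both shows $R \leq D$ and yields the identity $\sum_{i \in I} x_{[a], i} d_i = \frac{2}{5}(D - R)$. In particular, $C(S_1) < R \leq D$, so Constraint~\eqref{LPC:ls-covering} is applicable.

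Next I would simplify that constraint. Because $S_1 \cup S_2 = (a, b]$, the complement $[T] \setminus (S_1 \cup S_2)$ equals $[a] \cup (b, T]$; and for every $i \in I$ we have $r_i \leq b$, so $x_{(b, T], i} = 0$. Hence the $x$-sum in~\eqref{LPC:ls-covering} collapses to $\sum_{i \in I} x_{[a], i} d_i = \frac{2}{5}(D - R)$, and after rearrangement the constraint becomes
$$\sum_{s \in S_2} \min\{C_s,\, D - C(S_1)\}\, y_s \;\geq\; \tfrac{3}{5}(D - R) + \bigl(R - C(S_1)\bigr). \qquad (\star)$$

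For the case split, partition $S_2 = A \cup B$ with $A := \{s \in S_2 : C_s \geq R - C(S_1)\}$ and $B := S_2 \setminus A$, and let $Y_A := \sum_{s \in A} y_s$. Assume Inequality~\eqref{inequ:ls-reduction-case-2} fails, i.e., $Y_A < 3/5$; I will derive~\eqref{inequ:ls-reduction-case-1}. For $s \in A$ we have $\min\{C_s, D - C(S_1)\} \leq D - C(S_1)$ and $\min\{C_s, R - C(S_1)\} = R - C(S_1)$; for $s \in B$ both minima equal $C_s$ because $C_s < R - C(S_1) \leq D - C(S_1)$. Plugging these into $(\star)$ gives $\sum_{s \in B} C_s y_s \geq \tfrac{3}{5}(D - R) + (R - C(S_1)) - (D - C(S_1)) Y_A$, and therefore
$$\sum_{s \in S_2} \min\{C_s,\, R - C(S_1)\}\, y_s \;=\; (R - C(S_1)) Y_A + \sum_{s \in B} C_s y_s \;\geq\; (D - R)\!\left(\tfrac{3}{5} - Y_A\right) + (R - C(S_1)),$$
which is at least $R - C(S_1)$ because $D \geq R$ and $Y_A < 3/5$. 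This is exactly Inequality~\eqref{inequ:ls-reduction-case-1}.

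The main (and only real) obstacle is choosing the constants so that the slack $\frac{3}{5}(D - R)$ produced by the strengthened constraint exactly cancels the deficit $(D - R) Y_A$ coming from the crude bound on knapsacks in $A$; this is precisely the calibration between the $\frac{5}{2}$ in the definition of $R_{a,b}$ and the $\frac{3}{5}$ appearing in~\eqref{inequ:ls-reduction-case-2}, and once one commits to these constants the algebra is forced.
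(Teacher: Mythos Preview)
Your proof is correct and follows essentially the same approach as the paper: both start from Constraint~\eqref{LPC:ls-covering}, simplify the $x$-sum to $\sum_{i\in I}x_{[a],i}d_i$, assume \eqref{inequ:ls-reduction-case-2} fails, and then bound the loss incurred when replacing the threshold $d(I)-C(S_1)$ by $R_{a,b}-C(S_1)$ using $Y_A<\frac35$. The only cosmetic difference is that you make the $A/B$ split and the identity $\sum_{i\in I}x_{[a],i}d_i=\frac25(D-R)$ explicit, whereas the paper phrases the same computation as ``decrease of the left side is at most $\frac35\cdot\frac52\sum_{i\in I}x_{[a],i}d_i$''.
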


\begin{proof}
	Notice that $R_{a, b} = \sum_{{i} \in {[N]}: r_{i} \in (a, b]}\max\set{1 - {\frac52}x_{[a], {i}}, 0}d_{i} = \sum_{{i} \in I}\big(1-{\frac52}x_{[a], {i}}\big)d_i$.  Since $x_{[r_{i}], {i}} = 1$ and $x_{s, {i}} = 0$ for every ${i} \in I, s > r_{i}$, we have $x_{(a, b], {i}} = 1- x_{[a], {i}}$ for every ${i} \in I$. Constraint \eqref{LPC:ls-covering} for $S_1, S_2$ and $I$ implies 
	\begin{align}
		\sum_{s \in S_2}\min\set{C_s, d(I) - C(S_1)} y_s &\geq \sum_{{i} \in I}x_{S_1 \cup S_2, {i}}d_{i} - C(S_1) = \sum_{{i} \in I}\left(1-x_{[a],{i}}\right)d_{i}-C(S_1).  \label{inequ:sum-large}
	\end{align}
	
	If Inequality~\eqref{inequ:ls-reduction-case-2} holds then we are done. Thus, we assume Inequality~\eqref{inequ:ls-reduction-case-2} does not hold. We consider the decrease of the sum on the left-side of Inequality~\eqref{inequ:sum-large} after we change $d(I)$ to $R_{a,b} = \sum_{{i} \in I}\big(1-{\frac52} x_{[a],{i}}\big)d_{i}$. For each $s \in S_2$, if $C_s \geq R_{a,b} - C(S_1)$,  then the decrease of the coefficient of $y_s$ is at most $d(I) - R_{a,b} = {\frac52}\sum_{{i} \in I} x_{[a], {i}}d_{i}$.  Otherwise $C_s < R_{a,b} - C(S_1)$ and there is no decrease for the coefficient of $y_s$. Since Inequality~\eqref{inequ:ls-reduction-case-2} does not hold, the decrease of the left side is at most $\sum_{s \in S_2:C_s \geq R_{a,b} - C(S_1)}y_s \times {\frac52}\sum_{{i} \in I}x_{[a],{i}}d_{i}  \leq {\frac35} \times {\frac52}\sum_{{i} \in I}x_{[a],{i}}d_{i} = {\frac32}\sum_{{i} \in I}x_{[a],{i}}d_{i}$. So, we have 
	\begin{align*}
		\sum_{s \in S_2}\min\set{C_s, R_{a,b} - C(S_1)} y_s &\geq \sum_{{i} \in I}\left(1-x_{[a],{i}}\right)d_{i}-C(S_1) - {\frac32}\sum_{{i} \in I}x_{[a],{i}}d_{i} \\
		&=\sum_{{i} \in I}\left(1-{\frac52} x_{[a],{i}}\right)d_{i} - C(S_1) = R_{a, b} - C(S_1),
	\end{align*}
	which is exactly Inequality~\eqref{inequ:ls-reduction-case-1}.
\end{proof}

Let $\hat y_s = \min\set{10 y_s, 1}$ for every $s \in [T]$ and $S^+ = \set{s \in [T], \hat y_s = 1} =  \set{s \in [T]: y_s \geq 1/10}$ be the set of elements with $y$ values at least $1/10$.  For every interval $(a, b]$, define $\tilde R_{a, b} = \max\set{R_{a, b} - C((a, b] \cap S^+),0}$ to be the residual requirement for the interval $(a, b]$, as in Theorem~\ref{thm:interval-main}. For every interval $(a, b]$ with $\tilde R_{a, b} > 0$, we define $I = \set{{i} \in {[N]}:r_{i} \in (a, b],x_{[a],{i}}< {\frac25}}$ as in Lemma~\ref{lemma:ls-reduce-to-iKC}. Let $S_1 = (a, b] \cap S^+$ and $S_2 = (a, b] \setminus S^+$.  Since $C(S_1) < R_{a, b}$, we can check if Constraint~\eqref{LPC:ls-covering} is satisfied for $I, S_1$ and $S_2$. If not, we return this violated constraint.  So, we assume the condition is satisfied. Then by Lemma~\ref{lemma:ls-reduce-to-iKC}, we have either Inequality~\eqref{inequ:ls-reduction-case-1} or \eqref{inequ:ls-reduction-case-2}.  Notice that $y_s < 1/10$ and $\hat y_s = 10y_s$ for every $s \in S_2$. Multiplying the two inequalities by 10, and replacing $R_{a, b} - C(S_1)$ with $\tilde R_{a, b}$ and $S_2$ with $(a, b] \setminus S^+$, we have
\begin{flalign*}
	\text{either}  && \sum_{s \in (a, b] \setminus S^+}\min\set{C_s, \tilde R_{a, b}} \hat y_s &\geq 10\tilde R_{a, b}, && \\
	\text{or} && \sum_{s \in (a, b] \setminus S^+: C_s \geq  \tilde R_{a, b}} \hat y_s &\geq 6.  &&
\end{flalign*} 
The above property holds for every time interval $(a, b]$ such that $\tilde R_{a, b} > 0$.  Since there are only $O(T^2)$ intervals and for each interval $(a, b]$ we only need to check one constraint of form \eqref{LPC:ls-covering}, our algorithm runs in polynomial time. 

Thus, the \intervalKC instance $(T, C,  K, R)$ and the vector $\hat y \in [0, 1]^{T}$ satisfy the condition of Theorem~\ref{thm:interval-main}.  We can apply the theorem to find a set $S^* \supseteq S^+$ such that $K(S^*) \leq \sum_{s \in [T]}\hat y_s K_s  \leq 10\sum_{s \in [T]}y_sK_s$ and $C\big([a, b) \cap S^*\big) \geq R_{a, b}$ for every interval $(a, b]$.  Let $y^* \in \{0, 1\}^T$ be the indicator vector for $S^*$; by Lemma~\ref{lemma:sufficient-condition-for-xstar} there is an $x^*$ such that $(x^*, y^*)$ is a feasible solution to the \lotsizing instance and $\hcost(x^*) \leq {\frac52}\hcost(x)$.  Thus, we obtain a feasible solution $(x^*, y^*)$ to the \lotsizing instance whose cost is at most $10 \sum_{s \in [T]}y_s K_s + {\frac52}\hcost(x)$; this finishes the proof of Theorem~\ref{thm:lot-sizing}.
 	\section{Approximation Algorithm for Laminar Knapsack Covering via Iterative Rounding}
\label{sec:laminarKC}

This section is dedicated to the proof of Theorem~\ref{thm:laminar-main}, by describing our iterative rounding algorithm. 	Recall that we are given a \laminarKC instance $(T, C, K, \calS, R)$, a vector $y \in [0, 1]^T$ and set $S^+ = \set{s \in {[T]}: y_s = 1}$. Let $\tilde R_{a, b} := \max\set{R_{a, b}- C((a, b] \cap S^+), 0}$ for every $(a, b] \in \calS$. For every $(a, b] \in \calS$ with $\tilde R_{a, b}> 0$, either Inequality~\eqref{inequ:laminar-main-requirement-1} or Inequality~\eqref{inequ:laminar-main-requirement-2} holds. 

We now give an overview of the rounding algorithm.  We maintain a set $S^* \subseteq {[T]}$ of knapsacks that we already selected, and a set $S^\circ \subseteq {[T]}$ of knapsacks we discarded. There is a sub-family $\calS^1 \cup \calS^2 \subseteq \calS$ of ``active'' intervals from the laminar family $\calS$, where $\calS^1 \cap \calS^2 = \emptyset$. Each set $(a, b] \in \calS^1 \cup \calS^2$ has positive residual requirement $\hat R_{a, b}$, after we selected knapsacks in $S^*$; that is, $\hat R_{a, b}:=R_{a, b} - C\big(S^* \cap (a, b]\big) > 0$. We maintain an LP relaxation (\ref{LP:laminar-iteration}) in the rounding algorithm, in which we have a constraint for every interval $(a, b] \in \calS^1 \cup \calS^2$. We maintain a feasible solution $y$ to the LP. In each iteration of the rounding algorithm, we update $y$ to be an optimum vertex-point solution of \ref{LP:laminar-iteration}.  Then, we show that there must be some knapsack $s \notin S^\circ \cup S^*$ such that $y_s \in \{0, 1\}$. Depending on whether $y_s = 0$ or $y_s = 1$, we discard or select $s$, by adding $s$ to $S^\circ$ or $S^*$. Then we can update $\calS^1, \calS^2$ and the residual requirement vector $\hat R$ accordingly.  To analyze the algorithm, we prove two key lemmas. First, we show that the feasibility of $y$ is maintained, after we update $S^\circ, S^*, \calS^1, \calS^2$ and $\hat R$. Second,  we show that the algorithm makes progress in each iteration: some new knapsack $s$ is added to $S^\circ$ or $S^*$.

Now we describe the LP relaxation \ref{LP:laminar-iteration} used in the iterative rounding. In the LP,  $y_s$ indicates whether $s$ is included in the final solution.   Thus, we require $y_s = 0$ for every $s \in S^\circ$ (Constraint~\eqref{LPC:laminar-iteration-Scirc}) and $y_s = 1$ for every $s \in S^*$ (Constraint~\eqref{LPC:laminar-iteration-Sstar}).   For a set $(a, b] \in \calS^1$, we require Constraint~\eqref{LPC:laminar-iteration-calS1} to hold; for a set $(a, b] \in \calS^2$, we require Constraint~\eqref{LPC:laminar-iteration-calS2} to hold. Notice that the two constraints are respectively Inequality~\eqref{inequ:laminar-main-requirement-1} and  Inequality~\eqref{inequ:laminar-main-requirement-2}, with $S^+$ replaced by $S^*$ and $\tilde R$ replaced by $\hat R$. \smallskip

\begin{figure*}[h]\begin{mdframed}
	\begin{flalign}
		&& \min \qquad \sum_{s \in {[T]}} y_sK_s  \qquad \text{ s.t. }  \tag{$\mathsf{LP}_{\mathsf{IterRound}}$} \label{LP:laminar-iteration} &&
	\end{flalign} \vspace*{-\abovedisplayskip}\vspace*{-\belowdisplayskip}
	
	\noindent\begin{tabular}{p{0.62\textwidth}p{0.35\textwidth}}
		{\begin{alignat}{2} 
			\sum_{s \in (a, b] \setminus S^*}\min\set{C_s, \hat R_{a, b}}y_s &\geq 2 \hat R_{a, b},  &\qquad \forall (a, b] &\in \calS^1; \label{LPC:laminar-iteration-calS1} \\
			\sum_{s \in (a, b] \setminus S^*: C_s \geq \hat R_{a, b}}y_s &\geq 1, &\qquad \forall (a, b] &\in \calS^2; \label{LPC:laminar-iteration-calS2} 
		\end{alignat}}
		&
		{\begin{alignat}{2} 
			y_s &= 0, \qquad &\forall s &\in S^\circ; \label{LPC:laminar-iteration-Scirc} \\
			y_s &= 1, \qquad &\forall s &\in S^*; \label{LPC:laminar-iteration-Sstar} \\
			y_s &\in [0, 1], \qquad &\forall s &\notin S^\circ \cup S^*.  \label{LPC:laminar-iteration-Sother}
		\end{alignat}}
	\end{tabular}\vspace*{-\belowdisplayskip}
\end{mdframed}\end{figure*}

The pseudo-code for the rounding algorithm is given in Algorithm~\ref{alg:alg-laminar-KC}. We initialize $S^*, S^\circ,  \hat R,  \calS^1$ and $\calS^2$ in Statements~\ref{STATE:laminar-init-S} to \ref{STATE:laminar-init-calS1}. In each iteration of the outer loop (the loop beginning with Statement~\ref{STATE:laminar-loop}), we first repeatedly remove intervals $(a, b]$ from $\calS^1 \cup \calS^2$ if the requirement for $(a, b]$ is implied by the requirement for some other interval $(a', b'] \in \calS^1 \cup \calS^2$ (Statements~\ref{STATE:laminar-check-remove} and \ref{STATE:laminar-remove-2}). Then, we solve \ref{LP:laminar-iteration} to obtain an optimum vertex point solution $y$ (Statement~\ref{STATE:laminar-solve-lp}). Some knapsacks $s \in [T] \setminus S^\circ$ may have $y_s = 0$, in which case we permanently discard $s$ by adding $s$ to $S^\circ$ (Statement~\ref{STATE:laminar-update-Scirc}); some knapsacks $s \in [T] \setminus S^*$ may have $y_s = 1$, in which case we add $s$ to our final solution $S^*$ (Statement~\ref{STATE:laminar-update-Sstar}).  With $S^*$ updated, we shall update $\hat R$, $\calS^1$ and $\calS^2$ accordingly (Statements~\ref{STATE:laminar-loop-for-(a, b]} to \ref{STATE:laminar-move}).  In particular, we remove intervals $(a, b]$ from $\calS^1 \cup \calS^2$ if the requirement for $(a, b]$ is already satisfied by $S^*$ (Statement~\ref{STATE:laminar-remove-1}).  We move sets $(a, b]$ from $\calS^1$ to $\calS^2$ if $(a, b]$ satisfies Constraint~\eqref{LPC:laminar-iteration-calS2} (Statement~\ref{STATE:laminar-move}).  The algorithm terminates when $\calS^1 \cup \calS^2  = \emptyset$.  

Since we removed intervals from $\calS^1 \cup \calS^2$ in Statements~\ref{STATE:laminar-check-remove}, \ref{STATE:laminar-remove-2} and Statement~\ref{STATE:laminar-remove-1}, the requirements for intervals $(a, b]$ in $\calS \setminus (\calS^1 \cup \calS^2)$ are irrelevant. For such an interval $(a, b]$, either $C(S^* \cap (a, b]) \geq R_{a, b}$, or there exists some set $(a', b'] \in \calS^1 \cup \calS^2$ such that $(a', b'] \setminus (S^\circ \cup S^*) = (a, b] \setminus (S^\circ \cup S^*)$ and $\hat R_{a', b'} = R_{a', b'} - C(S^* \cap (a', b']) \geq R_{a, b} - C(S^* \cap (a, b])$. In the former case, the requirement for $(a, b]$ is already satisfied; in the later case, the requirement for $(a, b]$ is implied by the requirement for some interval $(a', b'] \in \calS^1 \cup \calS^2$, conditioned on that we must choose knapsacks in $S^*$ and not choose knapsacks in $S^\circ$. So, when $\calS^1 \cup \calS^2  = \emptyset$ becomes empty, the requirements for all sets in $\calS$ are satisfied.

\begin{algorithm}
	\caption{Iterative Rounding Algorithm for the Proof of Theorem~\ref{thm:laminar-main}.}
	\label{alg:alg-laminar-KC}
	\textbf{Input}: $(T, C, K, \calS, R), y, S^+$ and $\left(\tilde R_{a, b}\right)_{(a, b] \in \calS}$ as in Theorem~\ref{thm:laminar-main}. \\
	\textbf{Output}: a feasible solution $S^* \subseteq {[T]}$ to the instance with $K\big(S^*\big) \leq \sum_{s \in {[T]}}y_s K_s$. 
	
	\begin{algorithmic}[1]
				\STATE let $S^\circ \gets \emptyset, \quad S^* \gets S^+, \quad \hat R_{a, b} \gets \tilde R_{a, b}$ for every $(a, b] \in \calS$  \label{STATE:laminar-init-S}
		\STATE let $\calS^2 \gets \set{(a, b] \in \calS: \hat R_{a, b} > 0, \text{Constraint~\eqref{LPC:laminar-iteration-calS2} is satisfied for } (a, b]}$\label{STATE:laminar-init-calS2}
		\STATE let $\calS^1 \gets \set{(a, b] \in \calS: \hat R_{a, b} > 0} \setminus \calS^2$ \label{STATE:laminar-init-calS1}
		\STATE \textbf{while} $\calS^1 \cup \calS^2 \neq \emptyset$ \textbf{do} \hfill (\textbf{outer loop}) \label{STATE:laminar-loop} 
		\STATE\hspace*{\algorithmicindent} \textbf{while} there exist two distinct intervals $(a, b], (a', b'] \in \calS^1 \cup \calS^2$ \\ \hspace*{0.1\textwidth} s.t. $(a, b] \setminus (S^\circ \cup S^*) = (a', b'] \setminus (S^\circ \cup S^*)$  and $\hat R_{a', b'} \geq \hat R_{a, b}$ \textbf{do} \label{STATE:laminar-check-remove}
		\STATE \hspace*{2\algorithmicindent} $\calS^1 \gets \calS^1 \setminus \{(a, b]\}, \calS^2 \gets \calS^2 \setminus \{(a, b]\}$	\label{STATE:laminar-remove-2}
		\STATE\hspace*{\algorithmicindent} let $y$ to be an optimal vertex point solution of \ref{LP:laminar-iteration} \label{STATE:laminar-solve-lp}			
		\STATE\hspace*{\algorithmicindent} \textbf{while} there exists $s \in {[T]} \setminus S^\circ$ such that $y_s = 0$ \textbf{do}: add $s$ to $S^\circ$ \label{STATE:laminar-update-Scirc}
		\STATE\hspace*{\algorithmicindent} \textbf{while} there exists $s \in {[T]} \setminus S^*$ such that $y_s = 1$ \textbf{do} \hfill (\textbf{middle loop}) \label{STATE:laminar-check-ys1} 
		\STATE\hspace*{2\algorithmicindent} add $s$ to $S^*$ \label{STATE:laminar-update-Sstar} 		\STATE\hspace*{2\algorithmicindent} \textbf{for} every $(a, b] \in \calS^1 \cup \calS^2$ such that $s \in (a, b]$ \textbf{do} \label{STATE:laminar-loop-for-(a, b]} \hfill (\textbf{inner loop})
		\STATE\hspace*{3\algorithmicindent} $\hat R_{a, b} \gets \hat R_{a, b} - C_s$ \label{STATE:laminar-update-R}
				\STATE\hspace*{3\algorithmicindent} \textbf{if} $\hat R_{a, b}  \leq  0$ \textbf{then} $\calS^1 \gets \calS^1 \setminus \{(a, b]\}, \calS^2 \gets \calS^2 \setminus \{(a, b]\}$ \label{STATE:laminar-remove-1}
		\STATE\hspace*{3\algorithmicindent} \textbf{if} $(a, b] \in \calS^1$ and Constraint~\eqref{LPC:laminar-iteration-calS2} holds for $(a, b]$ \textbf{then} $\calS^1 \gets \calS^1 \setminus \{(a, b]\}, \calS^2 \gets \calS^2 \cup \{(a, b]\}$ \label{STATE:laminar-move}
	\end{algorithmic}
\end{algorithm}

Before formally proving the two key lemmas we need to prove Theorem~\ref{thm:laminar-main}, we make some simple observations about Algorithm~\ref{alg:alg-laminar-KC}, assuming Statement~\ref{STATE:laminar-solve-lp} always finds a feasible solution $y$.  \begin{observation} \label{obs:laminar-alg}
	\begin{properties}{obs:laminar-alg}
		\item After the initialization of $S^\circ$ and $S^*$ in Statement~\ref{STATE:laminar-init-S}, we always have $y_s = 0$ for every $s \in S^\circ$ and $y_s = 1$ for every $s \in S^*$. \label{obs:Sstar-Scirc}
		\item After the initialization of $\calS^1$ and $\calS^2$ in Statements~\ref{STATE:laminar-init-calS2} and \ref{STATE:laminar-init-calS1}, we always have $\calS^1  \cap \calS^2 = \emptyset$. \label{obs:calS1-calS2-disjoint}
		\item At the beginning of each iteration of the outer loop, we have $\hat R_{a, b} = R_{a, b} - C(S^* \cap (a, b]) > 0$ for every $(a, b] \in \calS^1 \cup \calS^2$. \label{obs:hatR}
	\end{properties}
\end{observation}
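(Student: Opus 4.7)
The plan is to verify each of the three invariants by straightforward induction on the algorithm's execution, cataloguing every place where the relevant objects are modified. For part~(\ref{obs:Sstar-Scirc}), Statement~\ref{STATE:laminar-init-S} sets $S^\circ=\emptyset$ (making that half of the claim vacuous) and $S^*=S^+$, while the input vector $y$ satisfies $y_s=1$ on $S^+$ by the definition of $S^+$ in Theorem~\ref{thm:laminar-main}. Subsequently, $y$, $S^\circ$, and $S^*$ only change in Statements~\ref{STATE:laminar-solve-lp},~\ref{STATE:laminar-update-Scirc}, and~\ref{STATE:laminar-update-Sstar}: Constraints~\eqref{LPC:laminar-iteration-Scirc} and~\eqref{LPC:laminar-iteration-Sstar} of~\ref{LP:laminar-iteration} force any feasible $y$ (hence the vertex-point solution returned in Statement~\ref{STATE:laminar-solve-lp}) to satisfy $y_s=0$ on $S^\circ$ and $y_s=1$ on $S^*$, while Statements~\ref{STATE:laminar-update-Scirc} and~\ref{STATE:laminar-update-Sstar} only add an index to $S^\circ$ or $S^*$ after checking its $y$-value equals $0$ or $1$, respectively.

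For part~(\ref{obs:calS1-calS2-disjoint}), disjointness is built into the initialization since $\calS^1$ is defined as a set difference taken against $\calS^2$. The only later modifications are in Statements~\ref{STATE:laminar-remove-2},~\ref{STATE:laminar-remove-1}, and~\ref{STATE:laminar-move}; the first two delete an interval from both families simultaneously, and Statement~\ref{STATE:laminar-move} moves $(a,b]$ from $\calS^1$ into $\calS^2$ under the explicit guard $(a,b]\in\calS^1$, which together with the inductive hypothesis precludes membership in $\calS^2$ beforehand.

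For part~(\ref{obs:hatR}), I would strengthen the claim to the invariant ``$\hat R_{a,b}=R_{a,b}-C(S^*\cap(a,b])>0$ for every $(a,b]\in\calS^1\cup\calS^2$'' and prove it by induction on iterations of the outer loop. At the first iteration, $S^*=S^+$ and $\hat R_{a,b}=\tilde R_{a,b}$; since membership in $\calS^1\cup\calS^2$ forces $\tilde R_{a,b}>0$ by Statements~\ref{STATE:laminar-init-calS2}--\ref{STATE:laminar-init-calS1}, the $\max$ defining $\tilde R_{a,b}$ is attained by $R_{a,b}-C(S^+\cap(a,b])$, giving the claim. For the inductive step, Statements~\ref{STATE:laminar-update-Sstar} and~\ref{STATE:laminar-update-R} jointly decrement $\hat R_{a,b}$ by exactly $C_s$ whenever $s\in(a,b]$ is added to $S^*$, preserving the equality, and Statement~\ref{STATE:laminar-remove-1} immediately purges any interval with $\hat R_{a,b}\le 0$, restoring positivity before the outer loop iterates again. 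The only real subtlety I anticipate is ensuring that the LP solution returned in Statement~\ref{STATE:laminar-solve-lp} is consistent with the values previously pinned on $S^\circ\cup S^*$; this is precisely what Constraints~\eqref{LPC:laminar-iteration-Scirc} and~\eqref{LPC:laminar-iteration-Sstar} are engineered to enforce, so beyond this the proof is purely mechanical bookkeeping.
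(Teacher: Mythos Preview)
Your proposal is correct and mirrors the paper's own justification almost exactly: for each of the three items you and the paper both argue by cataloguing the statements that modify the relevant objects (initialization, Statements~\ref{STATE:laminar-solve-lp}, \ref{STATE:laminar-update-Scirc}, \ref{STATE:laminar-update-Sstar} for part~(\ref{obs:Sstar-Scirc}); Statements~\ref{STATE:laminar-remove-2}, \ref{STATE:laminar-remove-1}, \ref{STATE:laminar-move} for part~(\ref{obs:calS1-calS2-disjoint}); Statements~\ref{STATE:laminar-update-Sstar}, \ref{STATE:laminar-update-R}, \ref{STATE:laminar-remove-1} for part~(\ref{obs:hatR})) and checking that each preserves the invariant. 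The only quibble is that what you call a ``strengthening'' in part~(\ref{obs:hatR}) is verbatim the stated claim, so no actual strengthening occurs.
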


Observation~\ref{obs:Sstar-Scirc} holds since we add $s$ to $S^\circ$ only if $y_s = 0$, to $S^*$ only if $y_s = 1$. In \ref{LP:laminar-iteration}, we have
Constraints \eqref{LPC:laminar-iteration-Scirc} and \eqref{LPC:laminar-iteration-Sstar}. Thus the two constraints hold when we update $y$ in Statement~\ref{STATE:laminar-solve-lp}.  After Statement~\ref{STATE:laminar-init-calS1}, $\calS^1 \cap \calS^2 = \emptyset$. The only place we add an interval to either $\calS^1$ or $\calS^2$ is Statement~\ref{STATE:laminar-move}, in which we move the interval from $\calS^1$ to $\calS^2$. Thus Observation~\ref{obs:calS1-calS2-disjoint} holds. After Statement~\ref{STATE:laminar-init-calS1}, we have that for every interval $(a, b] \in \calS^1 \cup \calS^2$, $\hat R_{a, b} = \tilde R_{a, b} = \max\set{R_{a, b} - C(S^+ \cap (a, b]), 0} = \max\set{R_{a, b} - C(S^* \cap (a, b]), 0} > 0$.  Thus, $\hat R_{a, b} = R_{a, b} - C(S^* \cap (a, b]) > 0$. Every time we add a knapsack $s$ to $S^*$ in Statement~\ref{STATE:laminar-update-Sstar},  for every $(a, b] \in \calS^1 \cup \calS^2$ such that $s \in (a, b]$, we decrease $\hat R_{a, b}$ by $C_s$ in Statement~\ref{STATE:laminar-update-R}. Then if $\hat R_{a, b} \leq 0$ after the decrease, we remove $(a, b]$ from $\calS^1 \cup \calS^2$ in Statement~\ref{STATE:laminar-remove-1}. Thus, Observation~\ref{obs:hatR} holds.

The first key lemma is Lemma~\ref{lemma:laminar-maintain-feasible}. The heart of the proof of the lemma is in the proof of Lemma~\ref{lemma:middle-loop-maintain-feasibility}. We prove Lemma~\ref{lemma:middle-loop-maintain-feasibility} now, while deferring the proof of Lemma~\ref{lemma:laminar-maintain-feasible} to Appendix~\ref{sec:proofs}.
\begin{lemma}
	\label{lemma:laminar-maintain-feasible}
	At the beginning of each iteration of the outer loop, $y$ is a feasible solution to \ref{LP:laminar-iteration}.
\end{lemma}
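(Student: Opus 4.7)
The plan is to prove the statement by induction on the number of completed outer-loop iterations.

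For the base case, right after Statements~\ref{STATE:laminar-init-S}--\ref{STATE:laminar-init-calS1}, the input vector $y$ from Theorem~\ref{thm:laminar-main} is feasible for \ref{LP:laminar-iteration}: Constraint~\eqref{LPC:laminar-iteration-Scirc} is vacuous since $S^\circ = \emptyset$; Constraint~\eqref{LPC:laminar-iteration-Sstar} follows from $S^* = S^+ = \set{s : y_s = 1}$; Constraint~\eqref{LPC:laminar-iteration-calS2} holds for every $(a,b] \in \calS^2$ by the definition of $\calS^2$ in Statement~\ref{STATE:laminar-init-calS2}; and for every $(a,b] \in \calS^1$ the very fact that $(a,b] \notin \calS^2$ means Inequality~\eqref{inequ:laminar-main-requirement-2} fails there, which by the hypothesis of Theorem~\ref{thm:laminar-main} forces Inequality~\eqref{inequ:laminar-main-requirement-1}. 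This last inequality is exactly Constraint~\eqref{LPC:laminar-iteration-calS1}, because at this moment $S^* = S^+$ and $\hat R = \tilde R$.

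For the inductive step I would walk through each sub-step of one outer-loop iteration. The inner while-loop at Statement~\ref{STATE:laminar-check-remove} only drops intervals from $\calS^1 \cup \calS^2$, which merely removes LP constraints and cannot break feasibility. Statement~\ref{STATE:laminar-solve-lp} resets $y$ to an optimal vertex-point solution; such a point exists and is feasible because the current LP is feasible by the inductive hypothesis combined with the previous sub-step. The while-loop at Statement~\ref{STATE:laminar-update-Scirc} adds some $s$ with $y_s = 0$ to $S^\circ$: the only new LP constraint introduced is $y_s = 0$, which already holds, and Constraints~\eqref{LPC:laminar-iteration-calS1} and \eqref{LPC:laminar-iteration-calS2} are syntactically unchanged since $S^*$ does not move. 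Finally, the middle loop (Statements~\ref{STATE:laminar-check-ys1}--\ref{STATE:laminar-move}) is precisely the scenario covered by Lemma~\ref{lemma:middle-loop-maintain-feasibility}, which I would invoke as a black box, iterating it across each pass of the middle loop.

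The hard part is packaged inside Lemma~\ref{lemma:middle-loop-maintain-feasibility}: when $s$ with $y_s = 1$ is moved into $S^*$ and each relevant $\hat R_{a,b}$ is decreased by $C_s$, both sides of Constraint~\eqref{LPC:laminar-iteration-calS1} shrink simultaneously, so preservation of the inequality demands a careful accounting of how the terms $\min\set{C_{s'}, \hat R_{a,b}}$ for the remaining $s' \in (a,b] \setminus S^*$ evolve, split by whether $C_s$ exceeds $\hat R_{a,b}^{\mathrm{new}}$. The factor $2$ on the right-hand side of \eqref{LPC:laminar-iteration-calS1} is precisely what supplies enough slack for this accounting to go through, and the promotion rule in Statement~\ref{STATE:laminar-move} handles the case in which the weaker Constraint~\eqref{LPC:laminar-iteration-calS2} becomes active and supplants the stronger one. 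Granted Lemma~\ref{lemma:middle-loop-maintain-feasibility}, the inductive step closes and Lemma~\ref{lemma:laminar-maintain-feasible} follows.
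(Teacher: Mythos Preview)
Your proposal is correct and follows essentially the same approach as the paper's proof. Both argue by induction on outer-loop iterations: the base case is checked directly from the initialization in Statements~\ref{STATE:laminar-init-S}--\ref{STATE:laminar-init-calS1} together with the hypotheses of Theorem~\ref{thm:laminar-main}, and the inductive step walks through Statements~\ref{STATE:laminar-check-remove}--\ref{STATE:laminar-move}, noting that removing intervals and enlarging $S^\circ$ cannot destroy feasibility, that Statement~\ref{STATE:laminar-solve-lp} returns a feasible $y$ because the LP was feasible, and invoking Lemma~\ref{lemma:middle-loop-maintain-feasibility} as a black box for each pass of the middle loop.
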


\begin{lemma}
	\label{lemma:middle-loop-maintain-feasibility}
	If $y$ is a feasible solution to \ref{LP:laminar-iteration} at the beginning of an iteration of the middle loop (the loop beginning with Statement~\ref{STATE:laminar-check-ys1}), then it is also feasible at the end of the iteration. 
\end{lemma}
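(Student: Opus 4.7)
The plan is to inspect how a single pass of the middle loop (Statements~\ref{STATE:laminar-check-ys1}--\ref{STATE:laminar-move}) affects each constraint of~\ref{LP:laminar-iteration}. Fix the knapsack $s$ with $y_s=1$ that is moved into $S^*$. Observe first that Constraints~\eqref{LPC:laminar-iteration-Scirc} and~\eqref{LPC:laminar-iteration-Sstar} are preserved: $S^\circ$ does not change, and the new member $s$ of $S^*$ satisfies $y_s=1$ by hypothesis. Moreover, for every $(a,b] \in \calS^1 \cup \calS^2$ with $s \notin (a,b]$, the quantities $S^* \cap (a,b]$ and $\hat R_{a,b}$ are untouched, so the constraint attached to $(a,b]$ is unchanged and remains satisfied. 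Hence it suffices to consider intervals $(a,b] \in \calS^1 \cup \calS^2$ that contain $s$, and write $\hat R^{\mathrm{old}}$ and $\hat R^{\mathrm{new}} = \hat R^{\mathrm{old}} - C_s$ for their residual requirement before and after Statement~\ref{STATE:laminar-update-R}.

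I would split such intervals into four subcases according to the inner loop. Case (a): $\hat R^{\mathrm{new}} \leq 0$; the interval is removed in Statement~\ref{STATE:laminar-remove-1} and no constraint needs to be checked. Case (b): the interval is moved from $\calS^1$ to $\calS^2$ in Statement~\ref{STATE:laminar-move}; then the new Constraint~\eqref{LPC:laminar-iteration-calS2} is checked by construction. Case (c): $(a,b]$ was in $\calS^2$ and stays in $\calS^2$. Since $\hat R^{\mathrm{new}}>0$ forces $C_s < \hat R^{\mathrm{old}}$, the knapsack $s$ contributed $0$ to the old Constraint~\eqref{LPC:laminar-iteration-calS2}, and the set $\{s' \in (a,b] \setminus S^*_{\mathrm{new}} : C_{s'} \geq \hat R^{\mathrm{new}}\}$ contains $\{s' \in (a,b] \setminus S^*_{\mathrm{old}} : C_{s'} \geq \hat R^{\mathrm{old}}\}$ because the threshold has dropped; so the new sum dominates the old one and is still at least $1$.

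The heart of the proof is case (d), where $(a,b]$ was in $\calS^1$ and remains in $\calS^1$, so Statement~\ref{STATE:laminar-move} did not fire, meaning Constraint~\eqref{LPC:laminar-iteration-calS2} fails for the updated $(a,b]$: $\sum_{s' \in (a,b] \setminus S^*_{\mathrm{new}} ,\, C_{s'} \geq \hat R^{\mathrm{new}}} y_{s'} < 1$. I would then account term by term for the decrease in the left-hand side of~\eqref{LPC:laminar-iteration-calS1}. Removing $s$ from the sum costs $\min\{C_s,\hat R^{\mathrm{old}}\}\, y_s = C_s$ (using $C_s < \hat R^{\mathrm{old}}$ and $y_s=1$). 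For every other $s'$, the replacement of $\min\{C_{s'},\hat R^{\mathrm{old}}\}$ by $\min\{C_{s'},\hat R^{\mathrm{new}}\}$ costs $0$ when $C_{s'} < \hat R^{\mathrm{new}}$ and at most $\hat R^{\mathrm{old}} - \hat R^{\mathrm{new}} = C_s$ when $C_{s'} \geq \hat R^{\mathrm{new}}$. Summing, the total drop is at most $C_s\bigl(1 + \sum_{s' \in (a,b] \setminus S^*_{\mathrm{new}},\, C_{s'} \geq \hat R^{\mathrm{new}}} y_{s'}\bigr) < 2 C_s$, where the strict inequality uses the failure of~\eqref{LPC:laminar-iteration-calS2}. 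Since the right-hand side of~\eqref{LPC:laminar-iteration-calS1} drops by exactly $2C_s$ as well, the margin of feasibility is preserved.

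The main obstacle is the joint bookkeeping in case (d): both the index set (we are dropping $s$) and the per-index coefficients (via $\hat R$ shrinking) change simultaneously, so the two sources of decrease must be estimated together. The constant $2$ in Constraint~\eqref{LPC:laminar-iteration-calS1} is exactly what allows the argument to close: the loss of $C_s$ from the knapsack $s$ and the loss of up to $C_s$ per remaining ``large'' knapsack combine to at most $2C_s$, which matches the decrease $2(\hat R^{\mathrm{old}} - \hat R^{\mathrm{new}})$ of the right-hand side. This also explains why the factor $2$ cannot be replaced by $1$ without additionally shifting the interval into $\calS^2$, which is the role of Statement~\ref{STATE:laminar-move}.
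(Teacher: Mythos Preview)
Your proposal is correct and follows essentially the same approach as the paper's proof: the same reduction to intervals containing the newly added knapsack, the same case split (removed, moved to $\calS^2$, stays in $\calS^2$, stays in $\calS^1$), and in the key case~(d) the same accounting that bounds the left-hand-side drop by $C_s\bigl(1+\sum_{s'\in(a,b]\setminus S^*_{\mathrm{new}},\,C_{s'}\ge\hat R^{\mathrm{new}}}y_{s'}\bigr)<2C_s$ using the failure of Constraint~\eqref{LPC:laminar-iteration-calS2}. The only minor omission is that you do not explicitly note Constraint~\eqref{LPC:laminar-iteration-Sother} is preserved, but this is immediate since $y$ is unchanged.
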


\begin{proof}
	For notational purposes, we use $s^*$ to denote the knapsack that will be added to $S^*$ in this iteration.  After this iteration, Constraint~\eqref{LPC:laminar-iteration-Scirc} is unaffected, and Constraints~\eqref{LPC:laminar-iteration-Sstar} and \eqref{LPC:laminar-iteration-Sother} remain satisfied since $y_{s^*} = 1$.  Thus, we only need to focus on Constraints~\eqref{LPC:laminar-iteration-calS1} and \eqref{LPC:laminar-iteration-calS2}.  If some $(a, b] \in \calS^1 \cup \calS^2$ does not contain $s^*$, then $(a, b] \setminus S^*$ and $\hat R_{a, b}$ do not change in the iteration. Thus the constraint for $(a, b]$ (either Constraint~\eqref{LPC:laminar-iteration-calS1} or Constraint~\eqref{LPC:laminar-iteration-calS2}) is unaffected. Thus, we can focus on an interval $(a, b] \in \calS^1 \cup \calS^2$ that contains $s^*$. In the iteration of the inner loop (the loop beginning with Statement~\ref{STATE:laminar-loop-for-(a, b]}) for this interval $(a, b]$, Statement~\ref{STATE:laminar-update-R} decreases $\hat R_{a, b}$ by $C_{s^*}$, Statement~\ref{STATE:laminar-remove-1} removes $(a, b]$ from $\calS^1 \cup \calS^2$ if $\hat R_{a, b}$ becomes at most $0$, and Statement~\ref{STATE:laminar-move} moves $(a, b]$ from $\calS^1$ to $\calS^2$ if $(a, b]$ satisfies Constraint~\eqref{LPC:laminar-iteration-calS2}.
	
	To the end of this proof, $S^*$ will refer to the set $S^*$ before we add $s^*$, $\hat R_{a, b}$ will refer to the value of $\hat R_{a, b}$ before we run Statement~\ref{STATE:laminar-update-R} and $\hat R^{\mathsf{new}}_{a, b} = \hat R_{a, b} - C_{s^*}$ will be the value of $\hat R_{a, b}$ after Statement~\ref{STATE:laminar-update-R}. If $\hat R^{\mathsf{new}}_{a, b} \leq 0$, then $(a, b]$ will be removed from $\calS^1 \cup \calS^2$ and there will be no Constraint for $(a, b]$ in \ref{LP:laminar-iteration}. Thus, we can assume $\hat R^{\mathsf{new}}_{a, b} > 0$. 
	
	Consider the first case where we have $(a, b] \in \calS^1$ at the beginning of the iteration of the middle loop. If $\sum_{s \in (a, b] \setminus S^* \setminus \{s^*\}: C_s \geq \hat R^{\mathsf{new}}_{a, b}}y_s \geq 1$, then $(a, b]$ will be moved to $\calS^2$ and Constraint~\eqref{LPC:laminar-iteration-calS2} for $(a, b]$ will be satisfied. So,  we can assume that $\sum_{s \in (a, b] \setminus S^* \setminus \{s^*\}: C_s \geq \hat R^{\mathsf{new}}_{a, b}} y_s < 1$.  $(a, b]$ will remain in $\calS^1$ after the iteration. We consider how Constraint~\eqref{LPC:laminar-iteration-calS1} for $(a, b]$ is affected by adding $s^*$ to $S^*$ and changing $\hat R_{a, b}$ to $\hat R^{\mathsf{new}}_{a, b}$. The right side of the inequality is decreased by exactly $2\big(\hat R_{a, b} - \hat R^{\mathsf{new}}_{a, b}\big) = 2C_{s^*}$.  We now consider the decrease of the left side.  First, adding $s^*$ to $S^*$ will decrease the left side by $y_{s^*} \min \set{C_{s^*}, \hat R_{a, b}} = C_{s^*}$ since $y_{s^*} = 1$ and $C_{s^*} < \hat R_{a, b}$.  Second, some knapsack $s \in (a, b] \setminus S^* \setminus \{s^*\}$ will have $\min\set{C_s, \hat R^{\mathsf{new}}_{a, b}} < \min\set{C_s, \hat R_{a, b}}$. This happens only if $C_s \geq \hat R^{\mathsf{new}}_{a,b}$. Moreover, if this happens, the decrease of the left-side due to this $s$ is at most $y_s(\hat R_{a, b} - \hat R^{\mathsf{new}}_{a, b}) = y_s C_{s^*}$.  Since we have $\sum_{s \in (a, b] \setminus S^* \setminus \{s^*\}: C_s \geq \hat R^{\mathsf{new}}_{a,b}} y_s< 1$, the overall decrease of the left-side of Constraint~\eqref{LPC:laminar-iteration-calS1} for $(a, b]$ is at most $C_{s^*} + \sum_{s \in (a, b] \setminus S^*\setminus \{s^*\}: C_s \geq \hat R^{\mathsf{new}}_{a,b}}y_sC_{s^*}  < 2C_{s^*}$, which is the decrease of its right-side. Thus, the constraint for $(a, b]$ remains satisfied at the end of the iteration for the middle loop. 
	
	Then assume that $(a, b] \in \calS^2$ at the beginning of the iteration of the inner loop.  Notice that we have assumed that $\hat R^{\mathsf{new}}_{a, b} > 0$, which implies $C_{s^*} < \hat R_{a, b}$. The left-side of Constraint~\eqref{LPC:laminar-iteration-calS2} can only increase:  (i) though we will add the knapsack $s^*$ to $S^*$, we have $C_{s^*} < \hat R_{a, b}$  and thus it does not contribute to the left-side at the beginning of the iteration of the middle loop; (ii) for a knapsack $s \in (a, b] \setminus S^*\setminus \{s^*\}$, $C_s \geq \hat R_{a, b}$ implies that $C_s \geq \hat R^{\mathsf{new}}_{a,b}$.   Thus, Constraint~\eqref{LPC:laminar-iteration-calS2} for $(a, b]$ remains true at the end of the iteration of the middle loop.  This finishes the proof of the lemma. 
\end{proof}				

We defer  the proof of the second key lemma to Appendix~\ref{sec:proofs}. The proof uses the fact that $\calS$ is a laminar family and the properties of vertex-point solutions to \ref{LP:laminar-iteration}.
\begin{lemma}
	\label{lemma:terminate-in-T-iterations}
	The outer loop will terminate in at most $T$ iterations. 
\end{lemma}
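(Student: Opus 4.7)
The plan is to show that in every iteration of the outer loop that finds $\calS^1 \cup \calS^2 \neq \emptyset$ on entry, the set $S^\circ \cup S^*$ gains at least one element. Because $S^\circ \cup S^* \subseteq [T]$, this directly bounds the iteration count by $T$. Observe that $|S^\circ \cup S^*|$ strictly grows within an outer iteration precisely when one of the while-loops at Statements~\ref{STATE:laminar-update-Scirc} or \ref{STATE:laminar-check-ys1} runs at least once, so it suffices to show that the vertex-optimal $y$ returned by Statement~\ref{STATE:laminar-solve-lp} has some integral coordinate $y_s \in \{0,1\}$ with $s \in F := [T] \setminus (S^\circ \cup S^*)$, using feasibility from Lemma~\ref{lemma:laminar-maintain-feasible}.

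I would argue by contradiction: suppose $y_s \in (0,1)$ for every $s \in F$. Since $y$ is a vertex of \ref{LP:laminar-iteration}, there must be $T$ linearly independent tight constraints. The $T - |F|$ fixing constraints \eqref{LPC:laminar-iteration-Scirc} and \eqref{LPC:laminar-iteration-Sstar} account for the variables in $S^\circ \cup S^*$; no bound \eqref{LPC:laminar-iteration-Sother} is tight by our assumption; so the remaining $|F|$ linearly independent tight rows come entirely from \eqref{LPC:laminar-iteration-calS1} and \eqref{LPC:laminar-iteration-calS2} and remain independent after projection onto the $F$-coordinates. Let $\calT \subseteq \calS^1 \cup \calS^2$ denote the intervals whose constraint is tight.

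I would then run a token-charging argument on $\calT$, exploiting that $\calT$ is laminar (inherited from $\calS$) and that the preprocessing at Statements~\ref{STATE:laminar-check-remove}--\ref{STATE:laminar-remove-2} leaves all intervals of $\calS^1 \cup \calS^2$ with pairwise distinct traces $(a,b] \cap F$. Each $(a,b] \in \calT$ is charged to a ``private'' element of $F \cap (a,b]$ not contained in any proper sub-interval of $\calT$; existence of such an element follows because otherwise the children's $F$-traces would cover $F \cap (a,b]$, so either a single child would share the parent's trace (ruled out by preprocessing), or several children would exactly partition it (which I would rule out by producing a linear dependence among the corresponding tight constraints). This injective charging gives $|\calT| \leq |F|$, and to sharpen to a strict inequality I would use the structural fact that a tight \eqref{LPC:laminar-iteration-calS1}-constraint with all $y_s < 1$ forces $\sum_{s \in F \cap (a,b]} y_s \geq 2$ (since $\min\{C_s, \hat R_{a,b}\} \leq \hat R_{a,b}$) and hence $|F \cap (a,b]| \geq 3$, while a tight \eqref{LPC:laminar-iteration-calS2}-constraint forces at least two elements with $C_s \geq \hat R_{a,b}$ and $y_s < 1$; this surplus, distributed across the laminar tree, should yield $|\calT| < |F|$, contradicting linear independence.

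The hardest step is executing the charging when an interval of $\calT$ has multiple children whose $F$-traces exactly partition its own. The preprocessing does not outlaw this configuration, and because the coefficients $\min\{C_s, \hat R_{a,b}\}$ in \eqref{LPC:laminar-iteration-calS1} depend on the parent's residual requirement rather than on $s$ alone, the parent's constraint is generally not a literal linear combination of its children's. Extracting a genuine linear dependence in this mixed hierarchy---where parent and children may be of different types ($\calS^1$ or $\calS^2$) and coefficients interact nontrivially with the $\hat R$ values---is the delicate step that forces full use of the laminar structure of $\calS$ together with both constraint forms \eqref{LPC:laminar-iteration-calS1} and \eqref{LPC:laminar-iteration-calS2}.
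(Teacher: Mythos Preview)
Your setup matches the paper's exactly: show that each outer iteration enlarges $S^\circ\cup S^*$, assume for contradiction that the vertex solution $y$ has $y_s\in(0,1)$ for all $s\in F:=[T]\setminus(S^\circ\cup S^*)$, and deduce that exactly $|F|$ linearly independent tight rows must come from \eqref{LPC:laminar-iteration-calS1}--\eqref{LPC:laminar-iteration-calS2}. The divergence is in the counting step, and the step you flag as ``hardest'' is a genuine gap.

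Your plan charges each tight interval to a private element of $F$, and when a node's $F$-trace is exactly partitioned by its children you propose to manufacture a linear dependence. But this need not exist: for a parent in $\calS^1$ with residual $\hat R$ and children (in $\calS^1$ or $\calS^2$) with strictly smaller residuals, the coefficient vectors $\big(\min\{C_s,\hat R_{\cdot}\}\big)_{s\in F}$ or the index sets $\{s:C_s\ge\hat R_{\cdot}\}$ differ across levels, and the resulting rows are in general linearly independent even when the $F$-supports partition. So the charging from intervals to elements breaks precisely here, and nothing in the hypotheses rescues it. The subsequent ``surplus'' argument for strict inequality is likewise only heuristic.

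The paper avoids this entirely by reversing the direction of the charging. Fix a family $\calS'$ of $|F|$ linearly independent tight intervals and assign every $s\in F$ to the \emph{minimal} interval of $\calS'$ containing it. Preprocessing guarantees that any node with exactly one child receives at least one element. In any laminar forest the number of leaves strictly exceeds the number of nodes with $\ge 2$ children; combining this with $|\calS'|=|F|$ and a pigeonhole count, some leaf $(a,b]\in\calS'$ receives at most one element of $F$, hence $\sum_{s\in(a,b]\setminus S^*}y_s<1$. This immediately contradicts whichever of \eqref{LPC:laminar-iteration-calS1} or \eqref{LPC:laminar-iteration-calS2} applies to $(a,b]$. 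No linear-dependence argument is needed beyond the dimension count $|\calS'|=|F|$, and the multi-child partition case simply never has to be ruled out.
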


With the two key lemmas, we can complete the proof of Theorem~\ref{thm:laminar-main}.  By Observation~\ref{obs:Sstar-Scirc}, $y_s = 1$ for every $s \in S^*$. So, we always have $K(S^*) \leq \sum_{s \in {[T]}}y_s K_s$. The only statement that changes $y$ is Statement~\ref{STATE:laminar-solve-lp}. Since $y$ is a feasible solution to \ref{LP:laminar-iteration} before running the statement, and the LP tries to minimize $\sum_{s \in {[T]}}y_sK_s$, we have that $\sum_{s \in {[T]}}y_sK_s$ can only decrease over the course of the algorithm.  Thus the returned solution $S^*$ has cost at most $\sum_{s \in {[T]}}y_s K_s$, for the initial $y$-vector.  

It remains to show that $S^*$ is a feasible solution to the instance $(T, C, K, \calS, R)$.  Notice that we remove intervals $(a, b]$ from $\calS^1 \cup \calS^2$ in Statement~\ref{STATE:laminar-remove-2} and Statement~\ref{STATE:laminar-remove-1}.   Assume towards the contradiction that $S^*$ is not a feasible solution. Let $(a, b] \in \calS$ be an interval with $C((a, b] \cap S^*) < R_{a, b}$; if there are many such intervals $(a, b]$, we choose the one that is removed from $\calS^1 \cup \calS^2$ the latest.  If we removed $(a, b]$ from $\calS^1 \cup \calS^2$ in Statement~\ref{STATE:laminar-remove-1}, then at that time we already have $C((a, b] \cap S^*) \geq R_{a, b}$. Thus $(a, b]$ can only be removed from $\calS^1 \cup \calS^2$ in Statement~\ref{STATE:laminar-remove-2}.  By Observation~\ref{obs:hatR}, $\hat R_{a', b'} = R_{a', b'} - C((a', b'] \cap S^*) > 0$ for every $(a', b'] \in \calS^1 \cup \calS^2$, at the beginning of an iteration of the outer loop. At the time of the removal there exists some other $(a', b'] \in \calS^1 \cup \calS^2$ such that $(a, b] \setminus (S^\circ \cup S^*) = (a', b'] \setminus (S^\circ \cup S^*)$ and $R_{a', b'} - C((a', b'] \cap S^*) \geq R_{a, b} - C((a, b] \cap S^*)$.  The inequality remains true as the algorithm proceeds since whenever we add a new knapsack $s \notin S^\circ \cup S^*$ to $S^*$,  $s \in (a, b]$ if and only if $s \in (a', b']$.   By our choice of $(a, b]$, at the end of the algorithm we have $C((a', b'] \cap S^*) \geq R_{a', b'}$, implying that $C((a, b] \cap S^*) \geq R_{a, b}$, a contradiction.  Thus, $S^*$ is a feasible solution and we proved Theorem~\ref{thm:laminar-main}.
 	
	\newpage
	
	\bibliographystyle{plain}
	\bibliography{reflist}

\begin{thebibliography}{10}

\bibitem{ASS14}
Hyung-Chan An, Mohit Singh, and Ola Svensson.
\newblock {LP}-based algorithms for capacitated facility location.
\newblock In {\em Proceedings of the 55th Annual IEEE Symposium on Foundations
  of Computer Science, FOCS 2014}.

\bibitem{AT05}
Shoshana Anily and Michal Tzur.
\newblock Shipping multiple items by capacitated vehicles: An optimal dynamic
  programming approach.
\newblock {\em Transportation Science}, 39(2):233--248, May 2005.

\bibitem{ATW09}
Shoshana Anily, Michal Tzur, and Laurence~A. Wolsey.
\newblock Multi-item lot-sizing with joint set-up costs.
\newblock {\em Mathematical Programming}, 119(1):79--94, 2009.

\bibitem{CFL00}
Robert~D. Carr, Lisa~K. Fleischer, Vitus~J. Leung, and Cynthia~A. Phillips.
\newblock Strengthening integrality gaps for capacitated network design and
  covering problems.
\newblock In {\em Proceedings of the Eleventh Annual ACM-SIAM Symposium on
  Discrete Algorithms}, SODA '00, pages 106--115, Philadelphia, PA, USA, 2000.
  Society for Industrial and Applied Mathematics.

\bibitem{ELR08}
Guy Even, Retsef Levi, Dror Rawitz, Baruch Schieber, Shimon~(Moni) Shahar, and
  Maxim Sviridenko.
\newblock Algorithms for capacitated rectangle stabbing and lot sizing with
  joint set-up costs.
\newblock {\em ACM Trans. Algorithms}, 4(3):34:1--34:17, July 2008.

\bibitem{FLR80}
M.~Florian, J.~K. Lenstra, and A.~H.~G. Rinnooy~Kan.
\newblock Deterministic production planning: Algorithms and complexity.
\newblock {\em Manage. Sci.}, 26(7):669--679, July 1980.

\bibitem{HW01}
C.~P. M.~van Hoesel and A.~P.~M. Wagelmans.
\newblock Fully polynomial approximation schemes for single-item capacitated
  economic lot-sizing problems.
\newblock {\em Mathematics of Operations Research}, 26:339--357, 2001.

\bibitem{LLS08}
R.~Levi, A.~Lodi, and M.~Sviridenko.
\newblock Approximation algorithms for the capacitated multi-item lot-sizing
  problem via flow-cover inequalities.
\newblock {\em Mathematics of Operations Research}, 33(2):461--474, 2008.

\bibitem{Li15}
Shi Li.
\newblock On uniform capacitated $k$-median beyond the natural {LP} relaxation.
\newblock In {\em Proceedings of the 26th Annual ACM-SIAM Symposium on Discrete
  Algorithms (SODA 2015)}.

\bibitem{Li16}
Shi Li.
\newblock Approximating capacitated \emph{k}-median with {(1} +
  {$\epsilon$})\emph{k} open facilities.
\newblock In {\em Proceedings of the 27th Annual ACM-SIAM Symposium on Discrete
  Algorithms (SODA 2016)}, pages 786--796, 2016.

\bibitem{Man58}
A.S. Manne.
\newblock Programming of economic lot sizes.
\newblock {\em Management Science}, 4(2):115--135, 1958.

\bibitem{PRW85}
Y~Pochet, T.~J.~V. Roy, and L.~A. Wolsey.
\newblock Valid linear inequalities for fixed charge problems.
\newblock {\em Operations Research}, 33(4):842--861, 1985.

\bibitem{PW93}
Y~Pochet and L.~A. Wolsey.
\newblock Lot-sizing with constant batches: Formulation and valid inequalities.
\newblock {\em Mathematics of Operations Research}, 18(4):767--785, 1993.

\bibitem{PW06}
Yves Pochet and Laurence~A. Wolsey.
\newblock {\em Production planning by mixed integer programming}.
\newblock Springer series in operations research and financial engineering.
  Springer, New York, Berlin, 2006.

\bibitem{WW58}
H.~M. Wagner and T.~M. Whitin.
\newblock Dynamic version of the economic lot size model.
\newblock {\em Manage. Sci.}, 5(1):89--96, October 1958.

\end{thebibliography}
	
	\appendix	
	\section{Reduction of Interval Knapsack Covering to Laminar Knapsack Covering}
\label{sec:intervalKC}

In this section, we prove Theorem~\ref{thm:interval-main} via a reduction of the given \intervalKC instance to a \laminarKC instance. Recall the given \intervalKC instance is $(T, C, K, R)$. We are also given a vector $y \in [0, 1]^{[T]}$ and $S^+ = \set{s \in [T]:y_s = 1}$. For every interval $(a, b]$, we have $\tilde R_{a, b} = \max\set{R_{a, b}- C((a, b] \cap S^+), 0}$; if $\tilde R_{a, b} > 0$, then either Inequality~\eqref{inequ:interval-main-requirement-1} or Inequality~\eqref{inequ:interval-main-requirement-2} holds. 

We start by defining the requirement function for the \laminarKC instance.  For every interval $(a, b]$, we define
\begin{equation}
	\tilde R'_{a, b} := \sup\set{W \geq 0: \sum_{s  \in (a, b] \setminus S^+}\min\set{C_s, W}y_s \geq 2W, \text{ or } \sum_{s \in (a, b] \setminus S^+: C_s \geq W}y_s \geq 1}. \label{equ:define-Rprime}
\end{equation}

That is, $\tilde R'_{a, b}$ is the largest possible value that makes either Inequality~\eqref{inequ:laminar-main-requirement-1} or Inequality~\eqref{inequ:laminar-main-requirement-2} hold for the interval $(a, b]$.  To construct a laminar family $\calS$ of intervals for the \laminarKC instance, we run a simple recursive procedure: let $\calS \gets \emptyset$ initially and then we call \textsf{construct-laminar-family}$(0, T)$. 

\begin{algorithm}[h]
	\caption{\textsf{construct-laminar-family}$(a, b)$} \label{alg:construct-laminar}
	\begin{algorithmic}[1]
		\STATE add $(a, b]$ to $\calS$ 		\STATE \textbf{if} $a + 1< b$ \textbf{then} 
		\STATE\hspace{\algorithmicindent} find the $c \in (a, b)$ that maximizes $\min\set{\tilde R'_{a, c}, \tilde R'_{c, b}}$ \label{STATE:interval-find-c}
		\STATE\hspace{\algorithmicindent} \textsf{construct-laminar-family}$(a, c)$
		\STATE\hspace{\algorithmicindent} \textsf{construct-laminar-family}$(c, b)$
	\end{algorithmic}
\end{algorithm}

The $\calS$ we constructed naturally defines a laminar tree, in which every non-leaf interval $(a, b]$ has exactly two children $(a, c]$ and $(c, b]$, for some $c \in (a, b)$; for every leaf $(a, b]$ in the tree, we have $b - a = 1$. The next lemma shows that the $\tilde R'$-requirements for intervals in $\calS$ will capture the $\tilde R$-requirements for all intervals. 
\begin{lemma}
	\label{lemma:calS-is-good}
	For every interval $(a, b]$ such that $\tilde R_{a, b} > 0$, there exist an interval $(a', b'] \in \calS$ such that $(a', b'] \subseteq (a, b]$ and $\tilde R'_{a', b'} \geq \tilde R_{a, b}$.
\end{lemma}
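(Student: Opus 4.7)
The plan is to descend through the binary tree structure of $\calS$.  First, locate the smallest $(\alpha, \beta] \in \calS$ that contains $(a, b]$.  If $(\alpha, \beta] = (a, b]$, I take $(a', b'] = (a, b]$: the theorem's hypothesis on $(a, b]$, namely Inequality~\eqref{inequ:interval-main-requirement-1} (with constant $10$) or Inequality~\eqref{inequ:interval-main-requirement-2} (with constant $6$), already exceeds the thresholds $2$ and $1$ in the definition of $\tilde R'$, so $\tilde R'_{a, b} \geq \tilde R_{a, b}$ immediately.

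Otherwise, by minimality of $(\alpha, \beta]$ the split point $\gamma$ of $(\alpha, \beta]$ must satisfy $\alpha \leq a < \gamma < b \leq \beta$.  I split the left-hand side of the hypothesis at $\gamma$: at least one of $(a, \gamma]$ or $(\gamma, b]$ --- WLOG $(a, \gamma]$ --- inherits at least half the mass, yielding $\sum_{s \in (a, \gamma] \setminus S^+} \min\{C_s, \tilde R_{a, b}\} y_s \geq 5 \tilde R_{a, b}$ in case~(i) or $\sum_{s \in (a, \gamma] \setminus S^+: C_s \geq \tilde R_{a, b}} y_s \geq 3$ in case~(ii).  Combined with the elementary scaling $\min\{C_s, W\} \geq (W/V) \min\{C_s, V\}$ for $W \leq V$, this yields $\tilde R'_{a, \gamma} \geq 2.5\,\tilde R_{a, b}$ (case~(i)) or $\tilde R'_{a, \gamma} \geq 1.5\,\tilde R_{a, b}$ (case~(ii)).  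I then descend into the subtree rooted at $(\alpha, \gamma] \in \calS$: the smallest $\calS$-interval containing $(a, \gamma]$ within this subtree must have the form $(\alpha_1, \gamma]$, since its right endpoint is forced to be $\gamma$.  If $\alpha_1 = a$ then $(a, \gamma] \in \calS$ and we are done; otherwise $(\alpha_1, \gamma]$ is further split at some $\gamma_1 \in (a, \gamma)$, producing the $\calS$-interval $(\gamma_1, \gamma] \subseteq (a, \gamma] \subseteq (a, b]$ as the natural candidate for $(a', b']$.

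To verify $\tilde R'_{\gamma_1, \gamma} \geq \tilde R_{a, b}$ I use the algorithm's maximin choice of $\gamma_1$: comparing with the alternative split $c = a$ gives $\tilde R'_{\gamma_1, \gamma} \geq \min\{\tilde R'_{\alpha_1, a}, \tilde R'_{a, \gamma}\}$.  Since $\tilde R'_{a, \gamma} \geq \tilde R_{a, b}$, the candidate succeeds whenever $\tilde R'_{\alpha_1, a} \geq \tilde R_{a, b}$.  The main obstacle is the remaining subcase $\tilde R'_{\alpha_1, a} < \tilde R_{a, b}$, where the outside piece carries too little mass for the maximin bound to close.  In that situation I would iterate the descent: the surplus factor (of $2.5$ or $1.5$) accumulated in $\tilde R'_{a, \gamma}$ beyond the target $\tilde R_{a, b}$ absorbs one further level of factor-$2$ loss from a second split, and the factor-$10$/$6$ slack of the original hypothesis caps the total number of such descent iterations by an absolute constant, after which either an endpoint alignment ($\alpha_k = a$) occurs or the subinterval itself appears in $\calS$.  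The symmetric situation where $(\gamma, b]$ receives the larger share of the mass is handled by a mirror-image argument; the delicate bookkeeping of the accumulated constants against the iterated descent depth is the technical heart of the proof.
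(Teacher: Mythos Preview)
Your setup mirrors the paper's: locate the minimal $(\hat a,\hat b]\in\calS$ containing $(a,b]$, note that its split point $c$ lies strictly inside $(a,b]$, pass to the half (say $(a,c]$) carrying at least half the mass, and then descend to the minimal $\calS$-interval $(\tilde a,c]\supseteq(a,c]$ with right child $(a',c]\subsetneq(a,c]$.  Where you diverge from the paper is in how you exploit the maximin property of the split $a'$ of $(\tilde a,c]$.  You compare against the alternative split $\ell=a$, obtaining $\tilde R'_{a',c}\geq\min\{\tilde R'_{\tilde a,a},\,\tilde R'_{a,c}\}$, and then face the obstacle that $\tilde R'_{\tilde a,a}$ is completely uncontrolled: the interval $(\tilde a,a]$ lies entirely outside $(a,b]$, and nothing in the hypothesis says anything about it.  In particular $(\tilde a,a]\setminus S^+$ may be empty, in which case $\tilde R'_{\tilde a,a}=0$ and your inequality is vacuous.

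Your proposed rescue by iterated descent does not work.  The ``surplus factor'' $2.5$ (or $1.5$) you record in $\tilde R'_{a,c}$ is irrelevant once the $\min$ collapses to the $\tilde R'_{\tilde a,a}$ side; you do not lose a factor of $2$, you lose everything.  Moreover there is no mechanism forcing an eventual endpoint alignment $\alpha_k=a$: the laminar family $\calS$ is built independently of the particular $(a,b]$ under consideration, and $a$ need not be an endpoint of any $\calS$-interval at any depth.  The fix, which is the paper's key idea, is to compare against a \emph{balanced} alternative split $\ell\in(\tilde a,c)$ chosen so that \emph{both} $(\tilde a,\ell]$ and $(\ell,c]$ carry mass at least $2\tilde R_{a,b}$ (respectively, $y$-mass at least $1$ in case~(ii)).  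Such an $\ell$ exists because the total mass on $(\tilde a,c]\supseteq(a,c]$ is at least $5\tilde R_{a,b}$ (respectively $3$) while each single term contributes at most $\tilde R_{a,b}$ (respectively $1$).  With this $\ell$ one gets $\min\{\tilde R'_{\tilde a,\ell},\tilde R'_{\ell,c}\}\geq\tilde R_{a,b}$ directly, and the maximin comparison then yields $\tilde R'_{a',c}\geq\tilde R_{a,b}$ in one step with no iteration.
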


\begin{figure}[h]
	\centering
	\includegraphics[width=0.8\textwidth]{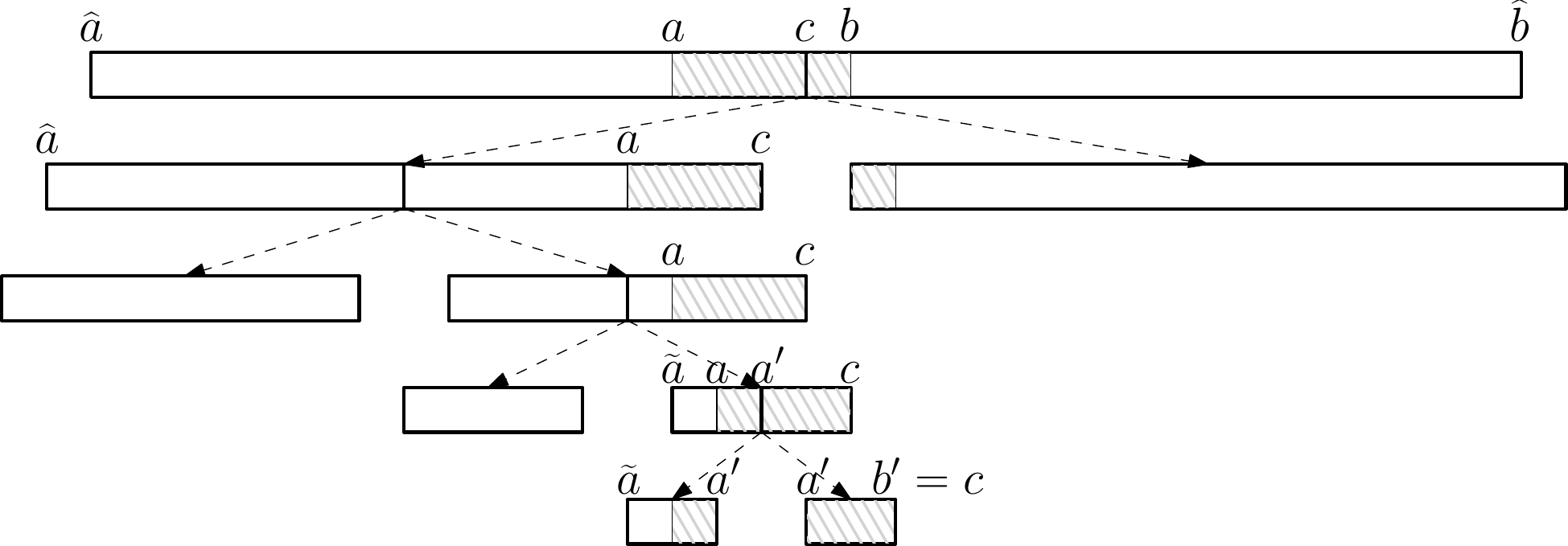}
	\caption{Illustration for the Proof of Lemma~\ref{lemma:calS-is-good}.}
	\label{fig:laminar-tree}
\end{figure}

\begin{proof}
	We first give the high-level idea behind the proof; see Figure~\ref{fig:laminar-tree} for illustration.  We consider the  inclusive-minimal interval $(\hat a, \hat b] \in \calS$ that contains $(a, b]$. Let $(\hat a, c]$ and $(c, \hat b]$ be the two child-intervals of $(\hat a, \hat b]$.  Then, $c \in (a, b)$ by our choice of $(\hat a, \hat b)$. One of the two intervals in $(a, c]$ and $(c, b]$ must contain enough total capacity.  Assume this interval is $(a, c] \subseteq (\hat a, c)$. Then, we start from $(\tilde a, c] =  (\hat a, c]$; if the right-child interval $(a', c]$ of $(\tilde a, c]$ is a superset of $(a, c]$, then we let $(\tilde a, c] = (a', c]$ and repeat. So, eventually, we can find an interval $(\tilde a, c]$ in the tree, and its right child $(a', c]$, such that $(\tilde a, c] \supseteq (a, c] \supsetneq (a', c]$. Then, we can show that the interval $(a', b'] = (a', c]$ satisfies the requirement of the lemma.
	
	Now we prove the lemma formally. We consider the first case in which we have Inequality~\eqref{inequ:interval-main-requirement-1} for $(a, b]$.  Then $10\tilde R_{a, b} \leq \sum_{s \in (a, b] \setminus S^+} \min\set{C_s, \tilde R_{a, b}}  y_s  \leq \sum_{s \in (a, b] \setminus S^+} \tilde R_{a, b} \cdot y_s$. Since $\tilde R_{a, b} > 0$ and $y_s \leq 1$ for every $s \in (a, b] \setminus S^+$, we have $b - a \geq 10$.  
	
	Consider the laminar tree defined by $\calS$. Let $\big(\hat a, \hat b\big]$ be the inclusive-minimal set in $\calS$ such that $(a, b] \subseteq \big(\hat a, \hat b\big]$. Such a set exists since $[T] \in \calS$.   Since $\hat b - \hat a \geq b - a \geq 10$, there will be two children $(\hat a, c]$ and $(c, \hat b]$ of $\big(\hat a, \hat b\big]$ in the laminar tree. By our choice of $\big(\hat a, \hat b\big]$, neither $\big({\hat a}, c\big]$ nor $\big(c, {\hat b}\big]$ is a superset of $(a, b]$. Thus, $a < c < b$.  By Inequality~\eqref{inequ:interval-main-requirement-1}, we have $\sum_{s \in (a, b] \setminus S^+}\min\set{C_s,\tilde R_{a,b}}y_s \geq 10\tilde R_{a,b}$. Thus, we have either $\sum_{s \in (a, c] \setminus S^+}\min\set{C_s, \tilde R_{a,b}}y_s  \geq 5 \tilde R_{a, b}$ or $\sum_{s = (c, b] \setminus S^+}\min\set{C_s, \tilde R_{a,b}}y_s  \geq 5 \tilde R_{a, b}$. Without loss of generality, we assume the first inequality holds; this implies that $c - a \geq 5$. 
	
	We run the following procedure to find two sets $\big(\tilde a, c\big], \big(a', c\big] \in \calS$. Let $\tilde a = \hat a$ initially; thus $\big(\tilde a, c\big] \in \calS$ and $\big(\tilde a, c\big] \supseteq \big(a, c\big]$. While the right child $\big(a', c\big]$ of $\big(\tilde a, c\big]$ is a superset of $(a, c]$, we let $\tilde a = a'$ and repeat.  At the end of the process, we find a set $(\tilde a ,c] \in \calS$ and its right child $(a', c] \in \calS$ such that $\big(\tilde a, c\big] \supseteq (a, c] \supsetneq \big(a', c\big]$. (Notice that the right child of $\big(\tilde a, c\big]$ always exists during the procedure since $c - a \geq 5$).  Notice that $\sum_{s \in (\tilde a, c] \setminus S^+}\min \set{C_s, \tilde R_{a, b}} y_s \geq 5 \tilde R_{a, b}$ and $\min\set{C_s, \tilde R_{a, b}}y_s \leq \tilde R_{a, b}$ for every $s \in (\tilde a, c] \setminus S^+$.  There is a number $\ell \in (\tilde a, c)$ such that $\sum_{s \in (\tilde a, \ell] \setminus S^+ } \min\set{C_s, \tilde R_{a, b}}y_s  \geq 2 \tilde R_{a, b}$ and $\sum_{s \in (\ell, c]\setminus S^+}\min\set{C_s, \tilde R_{a, b}}y_s \geq 2 \tilde R_{a, b}$. Thus, $\tilde R'_{\tilde a, \ell} \geq \tilde R_{a, b}$ and $\tilde R'_{\ell, c} \geq \tilde R_{a, b}$, by the definition of $\tilde R'$ in Equation~\eqref{equ:define-Rprime}. By the way we choose $c$ in Statement~\ref{STATE:interval-find-c} of Algorithm~\ref{alg:construct-laminar}, we have that $\min\set{\tilde R'_{\tilde a, a'}, \tilde R'_{a', c}} \geq \min\set{\tilde R'_{\tilde a, \ell}, \tilde R'_{\ell, c}} \geq \tilde R_{a, b}$. In particular, $\tilde R'_{a', c} \geq \tilde R_{a, b}$.  Let $b' = c$; then $ (a', b'] = (a', c] \subseteq (a, c] \subseteq (a, b]$ and $\tilde R'_{a', b'} = \tilde R'_{a', c} \geq \tilde R_{a, b}$.  This finishes the proof of the Lemma if Inequality~\eqref{inequ:interval-main-requirement-1} is satisfied for $(a, b]$. 
	
	We now consider the second case in which Inequality~\eqref{inequ:interval-main-requirement-2} is satisfied for $(a, b]$. The proof for this case is very similar to the previous case and thus we only give a sketch.  Let $\big(\hat a, \hat b \big]$ be the inclusive-minimal set in $\calS$ that contains $(a, b]$. Similarly, we can prove $\hat b - \hat a \geq b - a \geq 6$ and $(\hat a, \hat b]$ has two children $(\hat a, c]$ and $(c, \hat b]$ in the laminar tree. By the way we choose $(\hat a, \hat b]$, we have $a < c < b$.  Inequality~\eqref{inequ:interval-main-requirement-2} implies $\sum_{s \in (a, b] \setminus S^+: C_s \geq \tilde R_{a, b}} y_s \geq 6$. Thus, either $\sum_{s \in (a, c] \setminus S^+: C_s \geq \tilde R_{a, b}} y_s \geq 3$ or $\sum_{s \in (c, b] \setminus S^+: C_s \geq \tilde R_{a, b}} y_s \geq 3$. W.l.o.g, we assume the first case happens.  We find a set $(\tilde a, c] \in \calS$ and its right child $(a', c] \in \calS$ as in the previous case.  So, $\big(\tilde a, c\big] \supseteq (a, c] \supsetneq \big(a', c\big]$. Since $\sum_{s \in (\tilde a, c] \setminus S^+: C_s \geq \tilde R_{a, b}} y_s \geq 3$ and $y_s \leq 1$ for every $s \in (\tilde a, c] \setminus S^+$, there is an $\ell \in (\tilde a, c)$ such that $\sum_{s \in (\tilde a, \ell] \setminus S^+: C_s \geq \tilde R_{a, b}} y_s \geq 1$ and $\sum_{s \in (\ell, c] \setminus S^+: C_s \geq \tilde R_{a, b}} y_s \geq 1$. By the definition of $\tilde R'$ in Equation~\eqref{equ:define-Rprime}, we have $\tilde R'_{\tilde a, \ell} \geq \tilde R_{a, b}$ and $\tilde R'_{\ell, c} \geq \tilde R_{a, b}$. By the way we select $c$ in Statement~\ref{STATE:interval-find-c} of Algorithm~\ref{alg:construct-laminar}, we have that $\min\set{\tilde R'_{\tilde a, a'}, \tilde R'_{a', c}} \geq \min\set{\tilde R'_{\tilde a, \ell}, \tilde R'_{\ell, c}} \geq \tilde R_{a, b}$.  Let $b' = c$; then $ (a', b'] = (a', c] \subseteq (a, c] \subseteq (a, b]$ and $\tilde R'_{a', b'} = \tilde R'_{a', c} \geq \tilde R_{a, b}$. This finishes the proof of the lemma for the second case.
\end{proof}

With Lemma~\ref{lemma:calS-is-good}, we can define our \laminarKC instance.  Let $R'_{a, b} = \tilde R'_{a, b} + C\big((a, b] \cap S^+\big)$ for every $(a, b] \in \calS$; thus $\tilde R'_{a, b} = R'_{a, b} - C\big((a, b] \cap S^+\big)$.  Then the \laminarKC instance we focus on is $(T, C, K, \calS, R')$.  By the definition of $\tilde R'_{a, b}$, we have 
\begin{flalign*} 
\text{either } &&
	\sum_{s \in (a, b] \setminus S^+} \min \set{C_s, \tilde R'_{a, b}} y_s &\geq 2\tilde R'_{a, b}, && \\
\text{or} && \sum_{s \in (a, b] \setminus S^+ : C_s \geq  \tilde R'_{a, b}} y_s &\geq 1. &&
\end{flalign*}

Thus, we can use Theorem~\ref{thm:laminar-main} for the instance $(T, C, K, \calS, R')$ and $y$ to obtain a set $S^* \supseteq S^+$ such that $K(S^*) \leq \sum_{s \in [T]} y_s K_s$ and $C((a', b'] \cap S^*) \geq R'_{a', b'}$ for every set $(a', b'] \in \calS$. 

Now, focus on any interval $(a, b]$ over $[T]$. If $\tilde R_{a, b} = 0$, then $C((a, b] \cap S^+) \geq R_{a, b}$ and the requirement for $(a, b]$ is satisfied; thus we can assume $\tilde R_{a, b} > 0$.  Then by Lemma~\ref{lemma:calS-is-good}, we have a set $[a', b') \in \calS$ such that $[a', b') \subseteq (a, b]$ and $\tilde R'_{a', b'}\geq \tilde R_{a, b}$. Thus, we have that $C\big(S^* \cap (a, b]\big) - C\big(S^* \cap \big((a, a'] \cup (b', b]\big)\big) =   C\big(S^* \cap (a', b']\big)  \geq R'_{a', b'}  = \tilde R'_{a', b'} + C\big((a', b']\cap S^+\big)  \geq \tilde R_{a, b} + C\big((a', b']\cap S^+\big)$. So $C\big(S^* \cap (a, b] \big) \geq \tilde R_{a, b} + C\big((a', b']\cap S^+\big) + C\big(S^* \cap \big((a, a'] \cup (b', b]\big)\big) \geq \tilde R_{a, b} + C\big((a, b] \cap S^+\big) = R_{a, b}$, where the last inequality used the fact that $S^* \supseteq S^+$. This finishes the proof of Theorem~\ref{thm:interval-main}. 
 	\section{Omitted Proofs}
\label{sec:proofs}

\subsection{Proof of Lemma~\ref{lemma:sufficient-condition-for-xstar}}
\begin{proof}
	For every $\newi \in {[N]}$, and for every $s$ from $1$ to $r_\newi$, let $x'_{s, \newi} = \min\set{{\frac52} x_{s, \newi}, 1-x'_{[s-1], \newi}}$. By this definition, we have that $x'_{[t], \newi} = \min\set{{\frac52}x_{[t], \newi}, 1}$ for every $\newi \in {[N]}, t \in [r_\newi]$.  We shall first show that if  $x^*_{[t], \newi} \leq x'_{[t], \newi} = \min\set{{\frac52} x_{[t], \newi},  1}$ for every $\newi \in {[N]}$ and $t \in [r_\newi]$, then $\hcost(x^*) \leq {\frac52}\cdot \hcost(x)$.
	\begin{flalign*}
		&& \hcost(x^*) &= \sum_{\newi \in {[N]}}d_\newi\sum_{s \in [r_\newi]} x^*_{s, \newi}h_\newi(s) = \sum_{\newi \in {[N]}}d_\newi\sum_{s\in [r_\newi]}x^*_{s,\newi}\sum_{t=s}^{r_\newi-1}(h_\newi(t) - h_\newi(t+1)) && \text{since } h_i(r_i) = 0, \forall i \in [N]\\
		&& &= \sum_{\newi \in {[N]}}d_\newi\sum_{t=1}^{r_\newi-1}(h_\newi(t) - h_\newi(t+1))\sum_{s\in [t]}x^*_{s,\newi} &&\\
		&& &\leq {\frac52} \sum_{\newi \in {[N]}}d_\newi\sum_{t=1}^{r_\newi-1}(h_\newi(t) - h_\newi(t+1))\sum_{s\in [t]}x_{s,\newi} = {\frac52}\cdot \hcost(x). &&
	\end{flalign*}
	
	Thus, it suffices to find an $x^* \in [0, 1]^{[T] \times {[N]}}$ such that $\sum_{s \in [T]}x^*_{s,\newi } = 1$ for every $\newi \in {[N]}$, $x^*_{s, \newi} = 0$ if $s \notin S^*$ or $s > r_\newi$,  $\sum_{\newi \in {[N]}}x_{s, \newi}d_\newi \leq C_s$ for every $s \in S^*$, and $x^*_{[t], \newi} \leq x'_{[t], \newi}$ for every $\newi \in {[N]}$ and $t \in [r_\newi]$. This $x^*$ can be found by solving the following fractional $b$-matching instance on the bipartite graph $(A \cup B, E)$, where $A = \set{u_{s, \newi}: \newi \in {[N]}, s \in [r_\newi]}$ and $B = \{v_{s'}: s' \in S^*\}$.   We assign the $d_\newi$ units of demand for item $\newi$ to vertices in $\{u_{s,\newi}: s \in [r_\newi]\}$ according to $x'$: $u_{s, \newi}$ is assigned $x'_{s ,\newi} d_\newi$ units of demand.  In order to guarantee that $x^*_{[t], \newi} \leq x'_{[t], \newi}$ for every $t \in [r_\newi]$, we guarantee that the $x'_{s, \newi}d_\newi$ units of demand assigned to $u_{s, \newi}$ can only be satisfied by orders in $[s, r_\newi]$.  Thus, we define $E$ as follows: for every $u_{s, \newi} \in A$ and $v_{s'} \in B$ such that $s' \in [s, r_\newi]$, we have an edge $(u_{s, \newi}, v_{s'}) \in E$.  The goal of the fractional $b$-matching problem is to find a vector $z \in \R_{\geq 0}^E$ such that $\sum_{e\text{ incident to }u_{s, \newi}}z_e =x'_{s, \newi} d_\newi$ for every $u_{s, \newi} \in A$, and $\sum_{e\text{ incident to }v_{s'}}z_e \leq C_{s'}$ for every $v_{s'} \in B$.  
	
	It is well-known that the above fractional $b$-matching instance is feasible if and only if for every $A' \subseteq A$,  we have that $C(\{s': v_{s'} \in B \text{ is adjacent to some vertex in }A'\}) \geq \sum_{u_{s, \newi} \in A'}x'_{s, \newi}d_\newi$. That is, $C\left(\union_{u_{s, \newi} \in A'}[s, r_\newi]\right) \geq \sum_{u_{s,\newi} \in A'}x'_{s, \newi}d_\newi$.  If $u_{s, \newi}\in A'$, then we can assume that for every $s' \in (s, r_\newi]$, we also have $u_{s', \newi} \in A'$; this does not change the left-side of the inequality but increases the right-side and makes the inequality harder to satisfy. So, we can assume there is a set $I \subseteq {[N]}$, $t \in [0, T)^I$ such that $t_\newi \in [0, r_\newi)$ for every $\newi \in I$ and $A' = \set{u_{s, \newi}:\newi \in I, s \in (t_\newi, r_\newi]}$. Then the neighbors of $A'$ is $\set{v_{s'} :  s' \in \union_{\newi \in I} (t_\newi, r_\newi]}$.  
	
	Thus, to guarantee the existence of $x^*$, it suffices to guarantee that for every such $I$ and $t$, we have 
	\begin{align}
		\sum_{\newi \in I}x'_{(t_\newi, r_\newi], \newi}d_\newi \leq \sum_{s' \in \union_{\newi \in I}(t_\newi, r_\newi]} y^*_{s'} C_{s'}. \label{inequ:sufficient-condition}
	\end{align}
	We can further assume $\union_{\newi \in I}(t_\newi, r_\newi]$ is a time interval; otherwise, we can break $I$ into two sets $I'$ and $I''$ such that $\union_{\newi \in I'}(t_\newi, r_\newi]$ and $\union_{\newi \in I''}(t_\newi, r_\newi]$ are disjoint.  Inequality~\eqref{inequ:sufficient-condition} with $I$ replaced with $I'$, and the inequality with $I$ replaced with $I''$, implies the Inequality~\eqref{inequ:sufficient-condition}.  If $\union_{\newi \in I}(t_\newi, r_\newi] = (a, b]$, then the right side of Inequality~\eqref{inequ:sufficient-condition} is $\sum_{s \in (a, b]}y^*_{s}C_{s}= C(S^* \cap (a, b])$; we want to find the $(I, t)$ pair with $\sum_{\newi \in I}(t_\newi, r_\newi] = (a, b]$ that maximize the left side.   If some $\newi$ has $r_\newi > b$ or $r_\newi \leq a$ then $\newi \notin I$.  Otherwise, we can let $\newi \in I$; and $t_\newi = a$ will maximize $x'_{(t_\newi, r_\newi], \newi}$. So, the maximum possible value of the left side of Inequality~\eqref{inequ:sufficient-condition} is 
	$$\sum_{\newi \in {[N]}: r_\newi \in (a, b]}x'_{(a, r_i], \newi}d_\newi =\sum_{\newi \in {[N]}: r_\newi \in (a, b]}\left(1-x'_{[a], i}\right)d_\newi  = \sum_{\newi \in {[N]}: r_\newi \in (a, b]}\max\set{1-{\frac52}x_{[a], \newi}, 0}d_\newi = R_{a, b}.$$
	Thus, to guarantee the existence of $x^*$, it suffices to guarantee that for every interval $(a, b]$, we have $C(S^* \cap (a, b]) \geq R_{a, b}$.
\end{proof}

\subsection{Proof of Lemma~\ref{lemma:laminar-maintain-feasible}}
\begin{proof}
	We prove the lemma by induction.  Suppose we are at the beginning of first iteration of the outer loop. By the initialization of $S^\circ$ and $S^*$ in Statement~\ref{STATE:laminar-init-S}, and the fact that $y_s = 1$ for every $s \in S^+$,  Constraints~\eqref{LPC:laminar-iteration-Scirc} to \eqref{LPC:laminar-iteration-Sother} are satisfied.  The statement also sets the initial $\hat R$ to be $\tilde R$.  By Inequality~\eqref{inequ:laminar-main-requirement-1} and Inequality~\eqref{inequ:laminar-main-requirement-2}, for every $(a, b] \in \calS$ with $\hat R_{a, b} > 0$, either Constraint~\eqref{LPC:laminar-iteration-calS1} or Constraint~\eqref{LPC:laminar-iteration-calS2} is satisfied. Thus, after Statement~\ref{STATE:laminar-init-calS1}, Constraints~\eqref{LPC:laminar-iteration-calS1} and \eqref{LPC:laminar-iteration-calS2} are satisfied. So, the lemma holds for the first iteration of the outer loop. 
	
	Now, we assume that $y$ is a feasible solution to \ref{LP:laminar-iteration} at the beginning of some iteration of the outer loop; we prove that it is also feasible at the beginning the next iteration, if it exists.  We run Algorithm~\ref{alg:alg-laminar-KC} from the beginning of this iteration.  Since Statements~\ref{STATE:laminar-check-remove} and \ref{STATE:laminar-remove-2} only remove sets from $\calS^1$ and $\calS^2$, they do not destroy the feasibility of $y$. So, $y$ is a feasible solution to \ref{LP:laminar-iteration} before running Statement~\ref{STATE:laminar-solve-lp}. Then, the statement will always find a feasible solution $y$ to \ref{LP:laminar-iteration}. Statement~\ref{STATE:laminar-update-Scirc} only adds knapsacks with $y_s = 0$ to $S^\circ$ and does not destroy the feasibility of $y$.   Lemma~\ref{lemma:middle-loop-maintain-feasibility} says that running an iteration of the middle loop does not destroy the feasibility of $y$.  Thus, $y$ is a feasible solution at the beginning of the next iteration of the outer loop. This finishes the proof of Lemma~\ref{lemma:laminar-maintain-feasible}.	
\end{proof}

\subsection{Proof of Lemma~\ref{lemma:terminate-in-T-iterations}}
\begin{proof}
	We show that in each iteration of the outer loop,  we either have added some new knapsack to $S^\circ$ in Statement~\ref{STATE:laminar-update-Scirc}, or have added some new knapsack to $S^*$ in Statement~\ref{STATE:laminar-update-Sstar}. This proves that the algorithm will terminate in at most $T$ iterations since there are only $T$ knapsacks and $S^\circ \cap S^* = \emptyset$. 
	
	Let us run the algorithm from the beginning of an iteration of the outer loop, at which time we have $\calS^1 \cup \calS^2 \neq \emptyset$. Statements~\ref{STATE:laminar-check-remove} and \ref{STATE:laminar-remove-2} remove a set from $\calS^1 \cup \calS^2$ only if $\calS^1 \cup \calS^2$ contains at least two sets. Thus after Statement~\ref{STATE:laminar-remove-2} we still have $\calS^1 \cup \calS^2 \neq \emptyset$. So before  Statement~\ref{STATE:laminar-solve-lp}, there must be some constraint of form \eqref{LPC:laminar-iteration-calS1} or \eqref{LPC:laminar-iteration-calS2} in the LP.  Thus, $S^\circ \cup S^* \neq {[T]}$ since otherwise $y$ can not satisfy the constraint due to Observations~\ref{obs:Sstar-Scirc} and \ref{obs:hatR}, contradicting Lemma~\ref{lemma:laminar-maintain-feasible}.
	
	Then we run Statement~\ref{STATE:laminar-solve-lp} to find a vertex-point solution $y$ to \ref{LP:laminar-iteration}.  We assume towards the contradiction that we have $S^\circ = \set{s \in {[T]}: y_s = 0}$ and $S^* = \set{s \in {[T]}: y_s = 1}$; otherwise we will add some knapsack to $S^\circ$ or $S^*$ later.  We choose a set of $T$ linearly independent tight constraints that defines $y$.  We require that this set contains all tight constraints of the form~\eqref{LPC:laminar-iteration-Scirc}, \eqref{LPC:laminar-iteration-Sstar} and \eqref{LPC:laminar-iteration-Sother} (these tight constraints are linearly independent).  The number of tight constraints of form~\eqref{LPC:laminar-iteration-calS1} and \eqref{LPC:laminar-iteration-calS2} in the independent set is exactly $T - |S^\circ| - |S^*| = \big|{[T]} \setminus (S^\circ \cup S^*)\big| \geq 1$.  Let $\calS' \subseteq \calS^1 \cup \calS^2$ be the family of intervals that corresponds to these tight constraints; so $|\calS'| = \big|{[T]} \setminus (S^\circ \cup S^*)\big| \geq 1$.  
	
	If we have two sets $(a, b], (a', b'] \in \calS' \subseteq \calS^1 \cup \calS^2$ such that $(a', b'] \subsetneq (a, b]$, then Statements~\ref{STATE:laminar-check-remove} and \ref{STATE:laminar-remove-2} guaranteed that $(a', b'] \setminus (S^\circ \cup S^*) \subsetneq (a', b'] \setminus (S^\circ \cup S^*)$.  Otherwise $(a', b'] \setminus (S^\circ \cup S^*) = (a', b'] \setminus (S^\circ \cup S^*)$ and Statements~\ref{STATE:laminar-check-remove} and \ref{STATE:laminar-remove-2} must have removed either $(a, b]$ or $(a', b']$ from $\calS^1 \cup \calS^2$.  Thus, there must be a knapsack in $(a, b] \setminus (a', b']$ that is not in $S^\circ \cup S^*$.  
	
	Now we focus on the laminar forest defined by the set $\calS'$ (the forest is not empty). We assign each knapsack $s \in {[T]} \setminus (S^\circ \cup S^*)$ to the minimal set $(a, b] \in \calS'$ that contains $s$. If some $(a, b] \in \calS'$ has exactly one child $(a', b']$ in the laminar forest, then $((a, b] \setminus (a', b']) \setminus (S^\circ \cup S^*) \neq \emptyset$ and some knapsack must be assigned to $(a, b]$.  The number of leaves in the laminar forest is strictly more than the number of non-leaves that have at least two children. Since the number of nodes in the laminar forest is $|\calS'| = \big|{[T]} \setminus (S^\circ \cup S^*)\big|$, and each inner node with exactly one child is assigned at least one knapsack in ${[T]} \setminus (S^\circ \cup S^*)$, there must be a leaf $(a, b]$ in the forest such that $(a, b]$ is assigned at most one knapsack in ${[T]} \setminus (S^\circ \cup S^*)$. For this $(a, b]$, we have $\sum_{s \in (a, b] \setminus S^*} y_s = \sum_{s \in (a, b] \setminus (S^\circ \cup S^*)} y_s + \sum_{s \in (a, b] \cap S^\circ} y_s  < 1$, since $|(a, b] \setminus (S^\circ \cup S^*)| \leq 1$, every  $s \notin S^*$ has $y_s < 1$, and every $s \in (a, b] \cap S^\circ$ has $y_s = 0$.  Recall that $\hat R_{a, b} > 0$ by Observation~\ref{obs:hatR}.  If $(a, b] \in \calS^1$, then $\sum_{s \in (a, b] \setminus S^*} \min\set{C_s, \hat R_{a, b}} y_s < \sum_{s \in (a, b] \setminus S^*} \hat R_{a, b} y_s   < \hat R_{a, b} < 2\hat R_{a, b}$, contradicting Constraint~\eqref{LPC:laminar-iteration-calS1} for $(a, b]$. If $(a, b] \in \calS^2$, then $\sum_{s \in (a, b] \setminus S^*} y_s  < 1$ contradicts Constraint~\eqref{LPC:laminar-iteration-calS2} for $(a, b]$. This finishes the proof of the lemma.
\end{proof}

\end{document}